\definecolor{bl}{RGB}{30,30,150}
\tikzset{every picture/.style={remember picture}}
\newtheorem{theorem}{Theorem}
\newtheorem{corollary}[theorem]{Corollary}
\newtheorem{lemma}[theorem]{Lemma}
\newtheorem{proposition}[theorem]{Proposition}
\theoremstyle{definition}
\newtheorem{definition}[theorem]{Definition}
\newenvironment{example}
  {\pushQED{\qed}\examplex}
  {\popQED\endexamplex}
\newenvironment{remark}
  {\pushQED{\qed}\remarkx}
  {\popQED\endremarkx}
\def\a{\alpha } \def\b{\beta } \def\d{\delta } \def\D{\Delta } \def\e{\epsilon } \def\g{\gamma }   \def\l{\lambda } \def\L{\Lambda }       
  \newcommand{\sC}{{\cal C}}   \newcommand{\sF}{{\cal F}}  \newcommand{\sM}{{\cal M}}     \newcommand{\sL}{{\cal L}}    \newcommand{\sY}{{\cal Y}}  \newcommand{\sS}{{\cal S}} \newcommand{\sX}{{\cal X}}
\newcommand{\bX}{\mathbf{X}}
\newcommand{\bY}{\mathbf{Y}}
\newcommand{\bx}{\mathbf{x}}
\newcommand{\by}{\mathbf{y}}
    \newcommand{\R}{{\mathbb R}}
\newcommand{\supp}{\operatorname{supp}}
\newcommand{\bq}{\begin{equation}}
\newcommand{\eq}{\end{equation}}
\newcommand{\bpm}{\begin{pmatrix}}
\newcommand{\epm}{\end{pmatrix}}  
\begin{document}

\title{\Large\bf Factorized Mutual Information Maximization\bigskip}
\author[1]{{\normalsize\bf Thomas Merkh}\thanks{tmerkh@math.ucla.edu}}
\author[1,2]{{\normalsize\bf Guido Mont\'ufar}\thanks{montufar@math.ucla.edu}}

\affil[1]{\small Department of Mathematics, University of California, Los Angeles, CA 90095, USA}
\affil[2]{\small Department of Statistics, University of California, Los Angeles, CA 90095, USA}
\affil[2]{\small Max Planck Institute for Mathematics in the Sciences, 04103 Leipzig, Germany}
\date{\small\today}
\maketitle
    
\begin{abstract}
We investigate the sets of joint probability distributions that maximize the average multi-information over a collection of margins. These functionals serve as proxies for maximizing the multi-information of a set of variables or the mutual information of two subsets of variables, at a lower computation and estimation complexity. We describe the maximizers and their relations to the maximizers of the multi-information and the mutual information. 

\medskip 
\noindent
\emph{Keywords:} Multi-information, mutual information, divergence maximization, marginal specification problem, transportation polytope. 
\end{abstract}

\section{Introduction}

Mutual information (MI) is a measure of mutual dependence between two random variables that plays a central role in information theory and machine learning. The multi-information is a generalization to composite systems with an arbitrary number of random variables. 
The problem of maximizing the multi-information was proposed and studied in~\cite{10.2307/2692015}, motivated by infomax principles. 
It was shown, in the setting of finite valued random variables, that the optimizers of the unconstrained maximization problem are attained within low dimensional exponential families of joint probability distributions. 
A characterization of the optimizers was obtained in~\cite{37261}. 
Maximizing the multi-information can be regarded as a special instance of the more general problem of maximizing a divergence from an exponential family. 
Indeed, the multi-information of a joint probability distribution is equal to its Kullback-Leibler divergence from an independence model. The problem of maximizing the divergence from an exponential family has been advanced by 
Mat\'u\v{s}~\cite{matus2004,Matus:2009:DFD:1720558.1720561}, 
Mat\'u\v{s} and Ay~\cite{37278}, 
Mat\'u\v{s} and Rauh~\cite{6034269}, 
Rauh~\cite{46843}. 
The divergence maximization problem can also be asked for more general classes of probability models, such as probabilistic graphical models with hidden variables and stochastic neural networks. Works in this direction include~\cite{montufar2011expressive,montufar2014universal} and the short overview~\cite{62700}. 

Numerous machine learning applications involve optimizing the MI over a restricted set of joint probability distributions. Examples include the 
information bottleneck methods~\cite{witsenhausen1975,InformationBottleneck,45903}, 
the computation of positive information decompositions~\cite{e16042161}, 
the implementation of information theoretic regularizers in robotics and reinforcement learning~\cite{doi:10.1177/1059712310375314,montufar2016information}, 
the analysis of deep networks~\cite{NIPS2018_7453}, 
and unsupervised representation learning in deep learning~\cite{hjelm2018learning}. 
A parametrized set of distributions may or may not contain the unconstrained maximizers of a given functional. In some cases, the distributions are constrained by structures in addition to the chosen parametrization. For instance, in Markov decision processes the joint distributions between time consecutive states are governed not only by a parametrized policy model but also by the state transition mechanism; see, e.g.,~\cite{10.1371/journal.pcbi.1004427}. 
In order to optimize the mutual information in such cases, we usually need to resort to iterative parameter optimization techniques. Computing the parameter updates can quickly become intractable, even for a moderate number of variables, since the number of possible joint states grows exponentially in the number of variables. 
The estimation and computation of the mutual information is often a bottleneck in the implementation of algorithms that are based on it. 
For reference, the estimation of the MI from observations has been studied in~\cite{roulston1999estimating,kraskov2004estimating, slonim2005estimating,gao2015efficient}, 
and recently also using neural networks in~\cite{pmlr-v80-belghazi18a}. 

We are interested in proxies that are easier to estimate from samples and easier to compute than a given functional, and which might provide a signal for optimizing it, possibly having the same optimizers, or a structured subset of the optimizers. 
We consider measures defined as averages of the multi-information over subsets of all random variables in a composite system. 
Based on previous investigations of the maximizers of multi-information~\cite{37261, 10.2307/2692015}, we can expect that the maximizers of these measures will satisfy several properties. 
First, they may be contained within the closure of a low dimensional exponential family. Second, like many other instances of divergence maximizing distributions, they may exhibit reduced support. Last, if the measures are symmetric under the exchange of indices, the maximizing distributions are expected to display the same degree of symmetry. 
We characterize the sets of maximizers of various factorized measures and describe their relation to the maximizers of the multi-information and the maximizers of the mutual information of two subsets of variables. 

As a motivating application, we have in mind the estimation and maximization of the mutual information between time consecutive sensor readings of a reinforcement learning agent acting in an environment. The mutual information is the $1$ time step version of a notion called predictive information, which has been considered in the literature~\cite{bialek2001predictability,schossau2015information,ay2008predictive,zahedi2013linear,crutchfield2001synchronizing}. In~\cite{montufar2016information}, a factorized mutual information (here called SFMI) was used in place of the mutual information as an intrinsic reward signal to encourage more robust walking behaviors for a multi-legged robot. A similar approach was also used previously in~\cite{doi:10.1177/1059712310375314} for a chain of robots. Maximizing the mutual information between time consecutive measurements of a given variable should encourage behaviors that are diverse and, at the same time, predictable. This intuition comes from writing the mutual information between $X,Y$ as $I(X,Y) = H(X) - H(X|Y)$, where $H$ is the entropy. 
In this context, the mutual information serves as a regularizer that is added in order to ease the task-objective optimization problem and also to assign preferences to specific types of solutions. 
Some references to such regularizes, also known as intrinsic motivation, include~\cite{baldassarre2013intrinsically,mohamed2015variational,still2012information,chentanez2005intrinsically,klyubin2005empowerment,bialek2001predictability,pathak18largescale}. 

\medskip

This article is organized as follows. 
In Section~\ref{sec:2} we review definitions and in Section~\ref{sec:2a} relevant existing results around the multi-information. 
In Section~\ref{sec:3} we define notions of factorized multi and mutual information. 
The FMI of a set of random variables is defined as an average multi-information of subsets of variables. 
The SFMI of two random vectors is defined as an average mutual information of non-overlapping pairs with one variable from each vector. 
In Section~\ref{sec:4} we study the maximizers of the FMI and characterize the cases when they coincide with the maximizers of the multi-information. 
In Section~\ref{sec:5} we characterize the maximizers of the SFMI, which build a union of transportation polytopes that contain all maximizers of the multi-information and some maximizers of the mutual information. 
Section~\ref{sec:codes} describes partitions of sets of strings that we use for the characterization of SFMI maximizers. 
In Section~\ref{sec:6} we offer a summary and discussion of our results.

\section{Multi-information and mutual information}
\label{sec:2}

We consider discrete probability distributions supported on a finite set $\sX$. 
The set of all such distributions is a $(|\sX|-1)$-simplex denoted $\Delta_\sX$. 
Let $D(p \| q)$ be the Kullback-Leibler divergence between the probability distributions $p$ and $q$, which is defined as 
    \begin{equation} \label{eq:1}
    D(p\|q) := \sum_{x\in\sX} p(x) \log{\bigg(\frac{p(x)}{q(x)}\bigg)}.
    \eq 
    If ${\cal C}$ is a family of probability distributions, then let $D(p \| \sC)$ denote the infimum divergence between $p$ and any distribution in $\sC$, 
    \bq \label{eq:2}
    D(p \| {\cal C}) := \inf_{q \in {\cal C}} D(p \| q).
    \end{equation}
    We are concerned with joint probability distributions of $n$ variables, so that $x=(x_1,\ldots, x_n)$ and $\sX=\sX_1\times\cdots\times\sX_n$. 
    We denote $\sF$ the set of fully factorizable joint distributions, i.e., those $p\in\Delta_\sX$ which can be written as 
    \bq \label{eq:3}
p(X_1,\dots,X_n) = p_1(X_1)p_2(X_2) \cdots p_n(X_n), \quad p_i\in\Delta_{\sX_i},\; i=1,\ldots, n. 
    \eq
    The set $\mathcal{F}$ is also referred to as the independence model of $X_1,\ldots, X_n$. It is a $\sum_{i=1}^n(|\sX_i|-1)$-dimensional manifold in $\Delta_\sX$. 
    The multi-information of $n$ random variables with joint distribution $p$ is then defined as 
    \bq \label{eq:4}
    I_p(X_1,X_2, \dots, X_n) := D(p(X_1,X_2,\dots,X_n) \| \sF). 
    \eq
    It has a natural interpretation as the distance of $p$ from being independent. 
    Following standard notation we drop the subscript $p$, 
    and write simply $I(X_1,\ldots, X_n)$.  
    The minimum Kullback-Leibler divergence from a joint distribution $p$ to the set of factorizable distributions is attained uniquely by the product of its marginals, 
    \begin{equation}
    \operatorname{arginf}_{q\in \mathcal{F}} D(p\|q) = p(X_1)\cdots p(X_n), 
    \end{equation}
    where $p(X_i=x_i) = \sum_{x_j\colon j\neq i} p(x_1,\ldots, x_n)$, $x_i\in\mathcal{X}_i$, for $i=1,\ldots,n$. 
    The product of marginals is the maximum likelihood estimate of a joint distribution as a product distribution. 
    
    The multi-information can be written equivalently in terms of entropies, as 
    \bq \label{eq:4b}
    I(X_1,X_2, \dots, X_n) 
    = \sum_{i=1}^n H(X_i) - H(X_1,\ldots, X_n). 
    \eq
    Here $H(X) = -\sum_{x}p(x)\log{p(x)}$ denotes the entropy of $X$. 
    If $|\sX| = N$, then direct computation shows that the entropy is bounded above by $\log{(N)}$. 
    Therefore, the maximum value of multi-information of $n$ $N$-ary variables is bounded by $n \log{(N)}$. 
    However, as noted in~\cite[Lemma 4.1]{37261}, this bound is never attained, with a sharp bound being 
    \bq
     I(X_1,X_2, \dots, X_n) \leq (n-1)\log{(N)}.
    \eq    
    The set of all distributions which maximize multi-information for $n$ variables will be denoted by $\sM_n,$ or $\sM$ when $n$ is understood.  We note the chain rule of entropy,  
    \bq
    H(X_1,\ldots, X_n) = \sum_{i=1}^n H(X_i|X_1,\ldots, X_{i-1}). 
    \eq
    From this, we see that maximizing the multi-information~\eqref{eq:4b} can be interpreted as maximizing the marginal entropies while at the same time minimizing the conditional entropies. 
    In turn, the maximizers of the multi-information are joint distributions where each individual variable is diverse (high entropy), but predictable from the values of the other variables (low conditional entropy). 
    
    \sloppy 
    The multi-information of two variables is called mutual information. We will consider the mutual information of two random vectors $\bX=(X_1,\ldots, X_n)$ and $\bY=(Y_1,\ldots, Y_n)$, which is given by 
    \begin{equation}
    MI(\bX,\bY) := D(p(\bX,\bY)\|p(\bX)p(\bY)), 
    \end{equation}
    where $p(\bX=\bx) = \sum_{\by}p(\bx,\by)$ and $p(\bY=\by) = \sum_{\bx}p(\bx,\by)$ are the marginal distributions. 
    Note that here the divergence is measured to the set of distributions that factorize as a product $p(\bX)p(\bY)$, which does not need to factorize further over the components of each random vector.

\section{Maximizers of the multi-information} 
\label{sec:2a}

In this section we collect relevant existing results on the multi-information that we will use in the following sections of this article. 

\begin{proposition}[Maximizers of the multi-information,~{\cite[Corollary~4.10]{37261}}]
    \label{prop:MI}
        Consider $N$-ary random variables $X_1,\ldots, X_n$. 
        Then $\max_{p\in\Delta_\mathcal{X}} I_p(X_1,\ldots, X_n) = (n - 1)\log(N)$, and the maximizers are the uniform distributions on 
        $N$-ary codes of length $n$, minimum Hamming distance $n$, and cardinality $N$. The number of such codes is $(N!)^{n-1}$. 
\end{proposition}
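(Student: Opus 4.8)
The plan is to work directly with the entropy form \eqref{eq:4b} together with the chain rule for entropy. Fixing the order $1,\ldots,n$, I would telescope
\bq
I(X_1,\ldots,X_n)=\sum_{i=2}^n\bigl(H(X_i)-H(X_i|X_1,\ldots,X_{i-1})\bigr),
\eq
since the $i=1$ term vanishes. Each summand is a mutual information $I(X_i;X_1,\ldots,X_{i-1})\ge 0$, and because $X_i$ is $N$-ary it is bounded by $H(X_i)\le\log N$. Summing the $n-1$ terms recovers the sharp bound $I\le(n-1)\log N$ quoted above, so it remains to analyse the equality case and to count the maximizers.

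For equality every summand must equal $\log N$, which forces simultaneously $H(X_i)=\log N$ and $H(X_i|X_1,\ldots,X_{i-1})=0$ for each $i\ge 2$. The first condition says each such $X_i$ is uniform on $\sX_i$; the second says $X_i$ is an (almost surely) deterministic function of $X_1,\ldots,X_{i-1}$. Composing these relations inductively gives $X_i=g_i(X_1)$ for maps $g_i\colon\sX_1\to\sX_i$, so the whole vector is the image of $X_1$ under $x_1\mapsto(x_1,g_2(x_1),\ldots,g_n(x_1))$ and $p$ is the pushforward of the law of $X_1$ along this coordinate-injective map. The key step is then to pin down $X_1$: since each $g_i$ must push the law of $X_1$ onto the uniform distribution on the $N$-element set $\sX_i$, it is surjective, hence bijective (equal cardinalities), and a bijection sends a distribution to the uniform one only if that distribution was already uniform; thus $X_1$ is uniform as well.

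Consequently the support of $p$ is exactly $\{(x,g_2(x),\ldots,g_n(x)):x\in\sX_1\}$, a set of $N$ length-$n$ codewords carrying the uniform distribution. Bijectivity of each column map means the $N$ entries in every coordinate are distinct, which is equivalent to any two distinct codewords differing in all $n$ positions, i.e.\ to the code having cardinality $N$ and minimum Hamming distance $n$. Conversely, the uniform distribution on any such code has uniform one-dimensional margins and joint entropy $\log N$, so it attains $I=(n-1)\log N$; this proves both that the maximum is achieved and that the maximizers are precisely these uniform distributions.

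Finally I would count the codes. Because the first coordinate runs over all of $\sX_1$ exactly once, I can list the codewords in increasing order of their first symbol; this identifies a code with an independent, free choice of a permutation of $\sX_i$ for each of the remaining coordinates $i=2,\ldots,n$, giving $(N!)^{n-1}$. Equivalently, there are $(N!)^n$ ordered $N\times n$ arrays whose columns are permutations, and each unordered code arises from exactly $N!$ of them. The main obstacle is the equality analysis---in particular deducing that $X_1$ is uniform, since no single term of the telescoping bounds $H(X_1)$ directly---together with taking care, in the count, to quotient ordered arrays by row permutations so as to count codes as sets.
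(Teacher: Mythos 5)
Your proof is correct. Note that the paper does not prove this proposition at all: it is imported verbatim from Ay--Knauf \cite{37261} (their Corollary~4.10), so there is no in-paper argument to compare against. Your route --- telescoping $I=\sum_{i\ge 2}\bigl(H(X_i)-H(X_i\mid X_1,\ldots,X_{i-1})\bigr)$, forcing each summand to $\log N$, and deducing that the joint law is the pushforward of a uniform $X_1$ along coordinatewise bijections --- is essentially the standard argument behind the cited result, and it matches the informal reading the paper itself gives after its chain-rule display in Section~2 (``each variable diverse but predictable from the others''). The two delicate points you flag are handled correctly: uniformity of $X_1$ does not follow from any single telescoping term but from the fact that a bijection pushes its law onto the uniform law of $X_i$, and the count $(N!)^{n-1}$ correctly treats a code as an unordered set by anchoring the enumeration on the first coordinate (equivalently, quotienting the $(N!)^n$ ordered arrays by the $N!$ row permutations). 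The only blemish is the degenerate case $n=1$, where $I\equiv 0$ and every distribution is a maximizer, so the characterization as stated implicitly assumes $n\ge 2$; this caveat is inherited from the source statement and is not a flaw in your argument.
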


The work~\cite{37261} also characterizes the maximizers of the multi-information for variables with different state spaces, which is more technical. We will focus on the case where all variables are $N$-ary. 

\begin{example}
In the case of binary variables, $N=2$, Proposition~\ref{prop:MI} states that the set of maximizers of the multi-information consists of all uniform distributions over pairs of binary vectors of Hamming distance $n$. 
Letting $\pi_i : \{0,1\} \to \{0,1\}$ be one-to-one maps for $i = 2,3,\dots, n$, the maximizers of the multi-information take the form 
\begin{equation*}
        p
        = \frac12 \left( \d_{(0\pi_2(0)\pi_3(0)\cdots \pi_n(0))} + \d_{(1\pi_2(1)\pi_3(1)\cdots \pi_n(1))} \right),  
\end{equation*}
where $\delta_x$ denotes the point measure supported on string $x$. 
\end{example}

The maximizers of the mutual information are obtained from applying Proposition~\ref{prop:MI} to the special case of two random variables. 
\begin{corollary} 
\label{cor:one}
For two $N$-ary variables $X$ and $Y$, the maximum mutual information is $\max_{p\in\Delta_\mathcal{X}} MI_p(X,Y) = \log(N)$, and the maximizers are the uniform distributions on $N$-ary codes of length $2$, minimum Hamming distance $2$, and cardinality~$N$. The number of such codes is $N!$. 
\end{corollary}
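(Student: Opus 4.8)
The plan is to obtain this as a direct specialization of Proposition~\ref{prop:MI} to the case $n=2$. The first step is to record that for two variables the multi-information and the mutual information coincide: the independence model $\mathcal{F}$ of a pair $(X,Y)$ is precisely the set of product distributions $p(X)p(Y)$, so by definition~\eqref{eq:4} and the fact that the infimum over $\mathcal{F}$ is attained at the product of marginals, $I(X,Y) = D(p(X,Y)\|\mathcal{F}) = D(p(X,Y)\|p(X)p(Y)) = MI(X,Y)$. Hence every statement Proposition~\ref{prop:MI} makes about $I(X_1,X_2)$ transfers verbatim to $MI(X,Y)$.

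With $n=2$, Proposition~\ref{prop:MI} then gives the maximum value $(n-1)\log N = \log N$, identifies the maximizers as the uniform distributions on $N$-ary codes of length $2$, minimum Hamming distance $2$, and cardinality $N$, and counts them as $(N!)^{n-1} = N!$. This already yields every assertion of the corollary, so no further argument is strictly required.

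For concreteness I would also spell out what these maximizers look like, which doubles as an independent check. A code of cardinality $N$ over an alphabet of size $N$ with minimum Hamming distance $2$ must have all $N$ first coordinates distinct and all $N$ second coordinates distinct, since two codewords sharing a coordinate would be at distance at most $1$; as there are exactly $N$ symbols, each coordinate runs through the whole alphabet. Thus the code is the graph $\{(i,\pi(i))\}$ of a bijection $\pi$, the corresponding maximizer is the deterministic coupling $p(x,y) = \tfrac1N\,\mathds{1}[y=\pi(x)]$, and there are $N!$ such bijections, recovering the count. Each such coupling gives $MI(X,Y) = H(Y) - H(Y\mid X) = \log N - 0 = \log N$, confirming the optimal value directly.

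Since this is purely a corollary, there is no real obstacle. The only two points deserving attention are the identification of the two information quantities when $n=2$ and the elementary combinatorial observation that length-$2$, distance-$2$ codes of full cardinality are exactly the graphs of permutations; both are immediate.
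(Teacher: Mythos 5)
Your proposal is correct and follows essentially the same route as the paper, which simply obtains this corollary by specializing Proposition~\ref{prop:MI} to $n=2$ after noting that the multi-information of two variables is the mutual information. The additional identification of the length-$2$, distance-$2$ codes with graphs of permutations is a harmless (and accurate) elaboration beyond what the paper records.
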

We can apply Corollary~\ref{cor:one} not only to pairs of variables, but also to pairs of random vectors, since random vectors can be regarded as random variables with states over a product space. 

\begin{corollary}
\label{cor:MI}
For two $N$-ary random vectors $\bX=(X_1,\ldots, X_n)$ and $\bY=(Y_1,\ldots,Y_n)$, the maximizers of $MI(\bX,\bY)$ are the uniform distributions over $N$-ary codes of length $2n$, minimum distance $2$ when viewed as $N^n$-ary codes of length $2$, and cardinality $N^n$. The number of such codes is $N^n!$. 
\end{corollary}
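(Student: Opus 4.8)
The plan is to reduce the statement to Corollary~\ref{cor:one} by regarding each random vector as a single random variable over the product of its component state spaces. First I would observe that $\bX$ takes values in $\sX_1\times\cdots\times\sX_n$, a set of cardinality $N^n$, and likewise $\bY$ takes values in a set of cardinality $N^n$. Since the definition of $MI(\bX,\bY)$ measures the divergence to the product $p(\bX)p(\bY)$ without requiring any further factorization over the components of each vector (as the excerpt explicitly notes), the functional $MI(\bX,\bY)$ is literally the mutual information of two $N^n$-ary variables. Hence Corollary~\ref{cor:one} applies verbatim with $N$ replaced by $M:=N^n$.

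Applying Corollary~\ref{cor:one} with $M=N^n$ then yields $\max_{p} MI(\bX,\bY)=\log(N^n)=n\log N$, and identifies the maximizers as the uniform distributions supported on $M$-ary codes of length $2$, minimum Hamming distance $2$, and cardinality $M$, of which there are $M!=N^n!$. Concretely, each such maximizer is the uniform distribution on a support set $\{(s,\pi(s)) : s\in\sX\}$ for a bijection $\pi$ of the $N^n$-element state space, i.e.\ a permutation-matrix joint distribution.

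It remains to translate this length-$2$ description over the alphabet of size $N^n$ back into the length-$2n$ $N$-ary description asserted in the statement. Here I would identify each symbol of the size-$N^n$ alphabet with a length-$n$ string over the $N$-ary alphabet, so that concatenating the two blocks turns a length-$2$ codeword over the large alphabet into a length-$2n$ codeword over the small one, with the first $n$ coordinates encoding the $\bX$-block and the last $n$ the $\bY$-block. The cardinality $N^n$ and the count $N^n!$ carry over unchanged. The one point that needs care is the distance condition: Hamming distance $2$ between two length-$2$ codewords over the size-$N^n$ alphabet means that they differ in both coordinates, i.e.\ both the $\bX$-block and the $\bY$-block differ. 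This is precisely what is meant by ``minimum distance $2$ when viewed as $N^n$-ary codes of length $2$'', and it is in general stronger than merely having $N$-ary Hamming distance $\ge 2$ across all $2n$ positions.

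There is no substantial analytic obstacle, since the information-theoretic content has already been supplied by Proposition~\ref{prop:MI} and Corollary~\ref{cor:one}; the proof is essentially the alphabet reinterpretation together with a bookkeeping of cardinalities. The only thing to watch, and the step I expect to require the most care in the writeup, is this block-level reading of the minimum-distance condition: the distance must be measured between the two $N^n$-ary blocks rather than coordinatewise in the $N$-ary alphabet. I would therefore state the alphabet identification explicitly so that the combinatorial count $N^n!$ and the distance requirement are unambiguous.
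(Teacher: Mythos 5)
Your proposal is correct and follows essentially the same route as the paper, which derives Corollary~\ref{cor:MI} by regarding the random vectors $\bX$ and $\bY$ as single $N^n$-ary variables and applying Corollary~\ref{cor:one} to them. Your additional care about reading the minimum-distance condition at the block level (both the $\bX$-block and the $\bY$-block must differ) is exactly the intended interpretation of the statement.
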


\begin{example} \label{example:four}
Consider four binary random variables $(X_1,X_2,Y_1,Y_2)$; that is, $n=2$ pairs of $N=2$ valued variables. 
The $N!^{2n-1}=8$ maximizers of the multi-information of $(X_1,X_2,Y_1,Y_2)$ are 
    \begin{align*}
    &\tfrac{1}{2}(\delta_{0000}+\delta_{1111}),\quad
    \tfrac{1}{2}(\delta_{0101}+\delta_{1010}),\\
    &\tfrac{1}{2}(\delta_{0001}+\delta_{1110}),\quad
    \tfrac{1}{2}(\delta_{0100}+\delta_{1011}),\\
    &\tfrac{1}{2}(\delta_{0010}+\delta_{1101}),\quad
    \tfrac{1}{2}(\delta_{0111}+\delta_{1000}),\\
    &\tfrac{1}{2}(\delta_{0011}+\delta_{1100}),\quad
    \tfrac{1}{2}(\delta_{0110}+\delta_{1001}). 
    \end{align*}

The $N^n!=24$ maximizers of the mutual information of $\bX=(X_1,X_2)$ and $\bY=(Y_1,Y_2)$ are
    \begin{align*}
    \tfrac{1}{4}(\delta_{0010}+\delta_{0100}+\delta_{1001}+\delta_{1111}),\quad
    &\tfrac{1}{4}(\delta_{0000}+\delta_{0110}+\delta_{1001}+\delta_{1111}),^{\ast\ast}\\
    \tfrac{1}{4}(\delta_{0001}+\delta_{0100}+\delta_{1010}+\delta_{1111}),\quad
    &\tfrac{1}{4}(\delta_{0010}+\delta_{0100}+\delta_{1011}+\delta_{1101}),^{\ast\ast}\\
    \tfrac{1}{4}(\delta_{0001}+\delta_{0110}+\delta_{1000}+\delta_{1111}),\quad
    &\tfrac{1}{4}(\delta_{0011}+\delta_{0101}+\delta_{1010}+\delta_{1100}),^{\ast\ast}\\
    \tfrac{1}{4}(\delta_{0000}+\delta_{0101}+\delta_{1011}+\delta_{1110}),\quad
    &\tfrac{1}{4}(\delta_{0001}+\delta_{0111}+\delta_{1000}+\delta_{1110}),^{\ast\ast}\\
    \tfrac{1}{4}(\delta_{0000}+\delta_{0111}+\delta_{1001}+\delta_{1110}),\quad
    &\tfrac{1}{4}(\delta_{0000}+\delta_{0101}+\delta_{1010}+\delta_{1111}),^{\ast}\\
    \tfrac{1}{4}(\delta_{0011}+\delta_{0110}+\delta_{1000}+\delta_{1101}),\quad
    &\tfrac{1}{4}(\delta_{0010}+\delta_{0111}+\delta_{1000}+\delta_{1101}),^{\ast}\\
    \tfrac{1}{4}(\delta_{0000}+\delta_{0110}+\delta_{1011}+\delta_{1101}),\quad
    &\tfrac{1}{4}(\delta_{0011}+\delta_{0110}+\delta_{1001}+\delta_{1100}),^{\ast}\\
    \tfrac{1}{4}(\delta_{0000}+\delta_{0111}+\delta_{1010}+\delta_{1101}),\quad
    &\tfrac{1}{4}(\delta_{0001}+\delta_{0100}+\delta_{1011}+\delta_{1110}),^{\ast}\\
    \tfrac{1}{4}(\delta_{0011}+\delta_{0101}+\delta_{1000}+\delta_{1110}),\quad
    &\tfrac{1}{4}(\delta_{0010}+\delta_{0101}+\delta_{1000}+\delta_{1111}),\\
    \tfrac{1}{4}(\delta_{0001}+\delta_{0110}+\delta_{1011}+\delta_{1100}),\quad
    &\tfrac{1}{4}(\delta_{0010}+\delta_{0101}+\delta_{1011}+\delta_{1100}),\\
    \tfrac{1}{4}(\delta_{0001}+\delta_{0111}+\delta_{1010}+\delta_{1100}),\quad
    &\tfrac{1}{4}(\delta_{0011}+\delta_{0100}+\delta_{1010}+\delta_{1101}),\\
    \tfrac{1}{4}(\delta_{0010}+\delta_{0111}+\delta_{1001}+\delta_{1100}),\quad
    &\tfrac{1}{4}(\delta_{0011}+\delta_{0100}+\delta_{1001}+\delta_{1110}). 
    \end{align*} 

Some maximizers of $MI(\bX,\bY)$ can be expressed as convex combinations of maximizers of $I(X_1,\ldots, X_n,Y_1,\ldots, Y_n)$. 
In the current example, those marked by $\ast$ or $\ast\ast$ can. The four distributions marked with ${\ast}$ have marginals $p(X_1,Y_1)$ and $p(X_2,Y_2)$ of maximum mutual information. 
The four distributions marked with ${\ast\ast}$ have margins $p(X_1,Y_2)$ and $p(X_2,Y_1)$ of maximum mutual information. 
We will discuss them further in Example~\ref{ex:1}. 
\end{example}

\section{Factorized Measures of Multi-Information}\label{sec:3}

We propose alternatives to the multi-information based on averages over subsets of random variables. After presenting the definitions in this section, we investigate the maximizers in the next Sections~\ref{sec:4} and~\ref{sec:5}. 

\begin{definition}
For a family $\Lambda$ of subsets of $\{1,\ldots, n\}$, the \emph{$\Lambda$-factorized multi-information} of $(X_1,\ldots, X_n)$ is defined as 
\begin{equation}
I_\Lambda(X_1,\ldots, X_n) := \frac{1}{|\Lambda|} \sum_{\lambda\in\Lambda} I((X_i)_{i\in\lambda}). 
\end{equation}
Here $I((X_i)_{i\in\lambda})$ is the multi-information of the marginal distribution of variables $X_i$, $i\in\lambda$. 
In particular, $I_{\{1,\ldots,n\}}\equiv I$ and $I_{\{i\}}\equiv 0$. 
\end{definition}

We will focus on two special cases. 
The first is the average mutual information between all pairs of variables. 
We call this the average mutual information, or the pairwise factorized multi-information (FMI). 

\begin{definition}[Factorized multi-information, FMI]
For $n$ random variables $X_1,\ldots, X_n$, the FMI is 
    \begin{equation} \label{eq:5}
    FMI(X_1,\dots,X_n) := \frac{1}{{n\choose 2}} \sum_{i\neq j} MI(X_i,X_j).
    \end{equation}
\end{definition}

The second case that we consider is the average mutual information between pairs of variables with one variable from each of two random vectors. 
We call this the separated average mutual information, or the separated factorized mutual information (SFMI).
    
\begin{definition}[Separated factorized mutual information, SFMI]
For two random vectors ${\bf X} = (X_1,\dots,X_n)$ 
    and ${\bf Y} = (Y_1,\dots,Y_n)$, 
    the SFMI is 
    \bq \label{eq:7}
        SFMI({\bf X},{\bf Y}) := \frac{1}{n} \sum_{i=1}^n MI(X_i, Y_i). 
    \eq
\end{definition}

In the reinforcement learning application mentioned in the introduction, $(X_i,Y_i)$ corresponds to time consecutive readings of the $i$th sensor of the agent. The average over all sensors, $i=1,\ldots, n$, is used as a proxy for $MI(\bX,\bY)$. 
We will investigate the properties of this choice. 

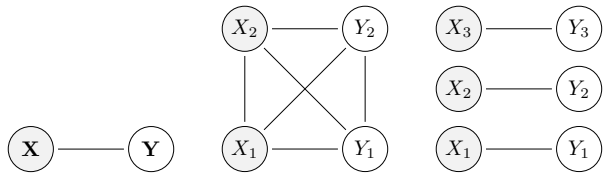
\begin{figure}
    \centering
    \scalebox{.8}{
    \begin{tikzpicture}
    \node[circle, fill = gray!10, inner sep=0pt, minimum size=.75cm, draw=black, label=center:${\bf X}$] (X) at (0,0) {};
    \node[circle, fill = gray!0, inner sep=0pt, minimum size=.75cm, draw=black, label=center:${\bf Y}$] (Y) at (2,0) {};
    
    \draw[-,shorten >= 2pt, shorten <= 2pt] (X) -- (Y) node[midway,right]{};
    \end{tikzpicture}
\qquad
    \begin{tikzpicture}
    \node[circle, fill = gray!10, inner sep=0pt, minimum size=.75cm, draw=black, label=center:$X_{1}$] (X1) at (0,1) {};
    \node[circle, fill = gray!0, inner sep=0pt, minimum size=.75cm, draw=black, label=center:$Y_{1}$] (Y1) at (2,1) {};
    \node[circle, fill = gray!10, inner sep=0pt, minimum size=.75cm, draw=black, label=center:$X_{2}$] (X2) at (0,3) {};
    \node[circle, fill = gray!0, inner sep=0pt, minimum size=.75cm, draw=black, label=center:$Y_{2}$] (Y2) at (2,3) {};
    \foreach \x in {1,2}{
    \foreach \y in {1,2}{ 
    \draw[-,shorten >= 2pt, shorten <= 2pt] (X\x) -- (Y\y) node[midway,right]{};}}
    \draw[-,shorten >= 2pt, shorten <= 2pt] (X1) -- (X2) node[midway,right]{};
    \draw[-,shorten >= 2pt, shorten <= 2pt] (Y1) -- (Y2) node[midway,right]{};
    \end{tikzpicture}
\qquad
    \begin{tikzpicture}
    \foreach \x in {1,2,3}{ 
    \node[circle, fill = gray!10, inner sep=0pt, minimum size=.75cm, draw=black, label=center:$X_{\x}$] (X\x) at (0,\x) {};
    \node[circle, fill = gray!0, inner sep=0pt, minimum size=.75cm, draw=black, label=center:$Y_{\x}$] (Y\x) at (2,\x) {};}
    
    \foreach \x in {1,2,3}{ 
    \draw[-,shorten >= 2pt, shorten <= 2pt] (X\x) -- (Y\x) node[midway,right]{};}
    \end{tikzpicture}
}
    \caption{Graphical illustration of MI, FMI, SFMI.  Graphically, the $I_\Lambda$ with sets $\l \in \Lambda$ of cardinality $|\l| > 2$ includes hyperedges, but the maximizing set can be described equivalently in terms of the edges contained in elements of $\Lambda$. 
    }
    \label{fig:1}
\end{figure}

An illustration of the proposed factorized measures is shown in Figure~\ref{fig:1}. In this figure, edges correspond to the pair marginals whose mutual-information is being added in the factorized measure. 
One natural question which arises is: how do the distributions with large multi-information compare to those with large factorized mutual information, under the various choices of margins? 
Since the factorized measures average interdependence between subsets of random variables, they can be less descriptive than measuring the joint interdependence of all of the variables. 
In the case where the joint distribution factorizes into the product of pairwise interaction terms, the multi-information naturally decomposes as the sum over mutual information terms~\cite{friedman2001multivariate}.  When higher order interactions exist, the average of pairwise mutual information can serve as an approximation to the multi-information. 
In relation to this, we note that the exponential family with sufficient statistics computing $\Lambda$ margins is, by definition, able to attain all feasible values of these margins, and in particular it is able to express all possible values of $I_\Lambda$. 

The trade-off between computing averages of mutual information between pairs of variables and multi-information or mutual information between groups of variables is that the former can be more easily estimated, since it involves only pair margins and not the full joint distribution. 
The computation of the pairwise factorized multi-information scales with $n^2 \cdot N^2$, where $N=\max_i|\sX_i|$, whereas the computation of multi-information scales with $N^n$. 

\section{Maximizers of the Factorized Multi-information}\label{sec:4}

The next lemma shows that every marginal of a joint distribution with maximum multi-information also has maximum multi-information. A related statement, for pair margins, appeared previously in~\cite[Remark 4.5 (c)]{37261}. 

\begin{lemma}
    \label{lemma:one}
    Consider $N$-ary variables $X_1,\ldots, X_n$. 
    \sloppy If a distribution maximizes the multi-information of $X_1,\ldots, X_n$, then for any subset of indices $\{j_1,\ldots, j_k\}\subseteq \{1,\ldots,n\}$ the corresponding marginal maximizes the multi-information of $X_{j_1},\ldots, X_{j_k}$. 
\end{lemma}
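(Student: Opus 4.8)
The plan is to prove the statement with a telescoping chain-rule decomposition of the multi-information, combined with the sharp upper bound $(k-1)\log(N)$ already recorded in the excerpt. Since the multi-information $\sum_i H(X_i)-H(X_1,\ldots,X_n)$ is symmetric in its arguments and hence invariant under relabeling of the variables, it suffices to treat the subset $\{j_1,\ldots,j_k\}=\{1,\ldots,k\}$ and show that the marginal on $X_1,\ldots,X_k$ attains the value $(k-1)\log(N)$.

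First I would establish a single-variable increment identity. Writing the multi-information in terms of entropies as in~\eqref{eq:4b} and subtracting the expression for one fewer variable gives, for each $m$,
\begin{equation*}
I(X_1,\ldots,X_m) - I(X_1,\ldots,X_{m-1}) = H(X_m) - H(X_m\mid X_1,\ldots,X_{m-1}) = MI(X_m,(X_1,\ldots,X_{m-1})).
\end{equation*}
Each such increment is a mutual information, hence nonnegative, and it is bounded above by $H(X_m)\le \log(N)$ since the variables are $N$-ary.

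Summing these increments for $m=k+1,\ldots,n$ telescopes to
\begin{equation*}
I(X_1,\ldots,X_n) - I(X_1,\ldots,X_k) = \sum_{m=k+1}^n MI(X_m,(X_1,\ldots,X_{m-1})) \le (n-k)\log(N).
\end{equation*}
Now, if the distribution maximizes the multi-information, the leading term equals $(n-1)\log(N)$ by Proposition~\ref{prop:MI}, so rearranging yields $I(X_1,\ldots,X_k) \ge (n-1)\log(N)-(n-k)\log(N) = (k-1)\log(N)$. Since $(k-1)\log(N)$ is itself the sharp upper bound for $k$ variables, the marginal must attain it, and is therefore a maximizer of the multi-information of $X_1,\ldots,X_k$.

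The argument carries no serious obstacle; the single point requiring care is \emph{tightness}. The conclusion relies on the sharp bound $(k-1)\log(N)$ rather than the naive $k\log(N)$ --- with the naive bound the two inequalities would not meet --- so I must invoke the sharp bound for both the $n$-variable and the $k$-variable multi-information. An alternative, purely combinatorial route would instead use the explicit description in Proposition~\ref{prop:MI}: a maximizer is uniform on an $N$-ary code of length $n$, cardinality $N$, and minimum Hamming distance $n$, so any two codewords differ in \emph{every} coordinate. Projecting onto any $k$ coordinates therefore keeps the $N$ codewords distinct and differing in all $k$ retained positions, giving again a minimum-distance-$k$ code of length $k$ and cardinality $N$, whose uniform distribution maximizes the $k$-variable multi-information by Proposition~\ref{prop:MI}. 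The only fact to verify on that route is that the projection does not collapse distinct codewords, which is immediate from the minimum distance being the full length $n$.
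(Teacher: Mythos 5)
Your proof is correct, but your primary argument takes a genuinely different route from the paper. The paper argues combinatorially: it invokes Proposition~\ref{prop:MI} to write the maximizer as the uniform distribution on an $N$-ary code of length $n$, minimum distance $n$, and cardinality $N$, and then checks directly that summing out one variable yields the uniform distribution on $N$ strings of length $n-1$ at mutual distance $n-1$ --- exactly the alternative you sketch in your closing paragraph. Your main argument instead telescopes the entropy increments $I(X_1,\ldots,X_m)-I(X_1,\ldots,X_{m-1}) = MI(X_m,(X_1,\ldots,X_{m-1})) \le \log(N)$ and squeezes $I(X_1,\ldots,X_k)$ between $(n-1)\log(N)-(n-k)\log(N)$ and the sharp bound $(k-1)\log(N)$; the identity and the bounds all check out, and the reduction to $\{j_1,\ldots,j_k\}=\{1,\ldots,k\}$ by symmetry is legitimate. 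What your route buys is independence from the explicit description of the maximizers: you only need the \emph{value} of the maximum and the sharpness of the $(k-1)\log(N)$ bound for every $k$, plus standard entropy inequalities, and as a byproduct you learn that each increment $MI(X_m,(X_1,\ldots,X_{m-1}))$ equals exactly $\log(N)$ at a maximizer. What the paper's route buys is an explicit description of what the marginal \emph{is} (again a uniform code distribution), which is what the subsequent results, in particular the compatibility argument in Theorem~\ref{thm:FMI}, actually use. Either proof is acceptable; if you keep the telescoping version, state explicitly that you are invoking the sharp bound of~\cite[Lemma 4.1]{37261} for $k$ variables as well as for $n$.
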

    
\begin{proof}
Consider a joint distribution $p(X_1,\ldots, X_n)$ which attains the maximum of the multi-information. 
By Proposition~\ref{prop:MI}, this is a uniform distribution over an $N$-ary code of length $n$, minimum Hamming distance $n$, and cardinality $N$. 
In this case, the marginal $p(X_n)$ is the uniform distribution over $\mathcal{X}_n$. 
Moreover, the conditional distributions $p(X_1,\ldots, X_{n-1}|X_n=x_n)$, $x_n\in\mathcal{X}_n$, are $N$ point distributions supported on strings of Hamming distance $n-1$. 

The marginal distribution of $(X_1,\ldots, X_{n-1})$ can be written as 
        \begin{align*}
            p(x_1,\dots,x_{n-1}) &= \sum_{x_n} p(x_1,\dots,x_{n-1},x_n)\\
            &=  \sum_{x_n} p(x_1,\dots,x_{n-1}| x_n)p(x_n)\\
            &= \frac{1}{|\sX_n|} \sum_{x_n} p(x_1,\dots,x_{n-1}| x_n), 
        \end{align*}
which is a uniform distribution over $N$ strings of length $n-1$ and minimum Hamming distance $n-1$, and hence has maximum multi-information. 
\end{proof}

Lemma~\ref{lemma:one} can be used to show the following theorem, which characterizes the choices of $\Lambda$ for which the maximizers of $I_\Lambda$ coincide with those of $I$. 

\begin{definition}
\label{definition:connected_covering}
A family $\Lambda$ of subsets of $\{1,\ldots, n\}$ is a \emph{connected covering} if it contains a list of sets $\lambda_t$, $t=1,\ldots, T$, with 
$\lambda_t \cap (\cup_{s=1}^{t-1} \lambda_s) \neq \emptyset$ for $t=2,\ldots, T$, and $\cup_t \lambda_t = \{1,\ldots, n\}$. 
\end{definition}

\begin{theorem}[Maximizers of FMI vs. I]
\label{thm:FMI}
Consider $N$-ary variables $X_1,\ldots, X_n$ with $N>1$, and a family $\Lambda$ of subsets of $\{1,\ldots, n\}$. 
The set of distributions that maximize $I_\Lambda(X_1,\ldots, X_n) = \frac{1}{|\Lambda|}\sum_{\lambda\in \Lambda} I((X_i)_{i\in\lambda})$ coincides with the set of distributions that maximize $I(X_1,\ldots, X_n)$ if $\Lambda$ is a connected covering of $\{1,\ldots, n\}$, and is a strict superset otherwise. 
\end{theorem}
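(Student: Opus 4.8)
The plan is to reduce everything to the common per-term upper bound. By Proposition~\ref{prop:MI} applied to the $|\lambda|$ variables $(X_i)_{i\in\lambda}$, each summand satisfies $I((X_i)_{i\in\lambda}) \le (|\lambda|-1)\log N$, so for every $p$ we have $I_\Lambda(p) \le U := \frac{1}{|\Lambda|}\sum_{\lambda\in\Lambda}(|\lambda|-1)\log N$, with equality exactly when every margin $p((X_i)_{i\in\lambda})$ is itself a maximizer of its own multi-information. If $p$ maximizes $I(X_1,\ldots,X_n)$, then Lemma~\ref{lemma:one} makes every such margin maximal, so $I_\Lambda(p)=U$. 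Hence $U$ is attained, $\max I_\Lambda = U$, the maximizers of $I_\Lambda$ are \emph{precisely} the $p$ all of whose $\lambda$-margins are maximal, and the set of $I$-maximizers is always contained in the set of $I_\Lambda$-maximizers. This disposes of the ``superset'' part in both cases simultaneously; what remains is to decide when the inclusion is an equality.

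For the connected-covering case I would first record the structure of a maximal margin: by Proposition~\ref{prop:MI} the margin on $\lambda$ is uniform on an $N$-ary code of length $|\lambda|$ and minimum distance $|\lambda|$, which forces each coordinate to take all $N$ values and any two coordinates of $\lambda$ to be related by a bijection on the support; in particular, on $\supp(p)$, fixing one variable of $\lambda$ determines all the others in $\lambda$. I would then run the chain $\lambda_1,\ldots,\lambda_T$ from Definition~\ref{definition:connected_covering} and argue by induction on $t$ that on $\supp(p)$ the value of any one variable in $S_t := \bigcup_{s\le t}\lambda_s$ determines all of $S_t$, the inductive step using that $\lambda_t$ meets $S_{t-1}$ in a shared index $j$, so the value of $j$ simultaneously determines $\lambda_t$ and $S_{t-1}$. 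At $t=T$ we get $S_T=\{1,\ldots,n\}$, so a single variable (say $X_1$, which is uniform) determines the whole string through a composition of bijections; thus $p$ is uniform on exactly $N$ strings pairwise differing in every coordinate, i.e. a maximizer of $I$ by Proposition~\ref{prop:MI}. Combined with the first paragraph, this gives equality.

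For the converse I would first translate ``not a connected covering'' (for $n\ge 2$) into the statement that there is a partition $\{1,\ldots,n\}=A\sqcup B$ into nonempty blocks with every $\lambda\in\Lambda$ lying entirely in $A$ or entirely in $B$: take $A=\{i_0\}$ for an uncovered index $i_0$, or $A$ equal to the vertex set of one connected component of the hypergraph $(\{1,\ldots,n\},\Lambda)$ otherwise, and conversely a chain ordering is impossible once such a partition exists. Given the partition I would exhibit a witness $p = p_A\otimes p_B$, where $p_A,p_B$ are any maximizers of the multi-information of $(X_i)_{i\in A}$ and $(X_i)_{i\in B}$ (which exist by Proposition~\ref{prop:MI}). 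Every $\lambda$-margin of $p$ is a margin of $p_A$ or of $p_B$, hence maximal by Lemma~\ref{lemma:one}, so $p$ maximizes $I_\Lambda$; but independence across the $A\mid B$ split makes multi-information additive, giving $I(X_1,\ldots,X_n)\le (|A|-1)\log N + (|B|-1)\log N = (n-2)\log N < (n-1)\log N$ since $N>1$, so $p$ is not an $I$-maximizer and the inclusion is strict.

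The step I expect to be the main obstacle is the inductive gluing of the second paragraph: upgrading the within-block ``bijective determination'' supplied by maximal margins into a statement about the support of the \emph{full} joint distribution, and checking that the composition of bijections along the connectivity chain remains a bijection, so that the resulting $N$ strings genuinely have pairwise Hamming distance $n$. The partition characterization in the third paragraph is also slightly delicate since it folds the covering failure and the connectivity failure into a single case, so I would state and verify that equivalence explicitly rather than handling the two failure modes separately.
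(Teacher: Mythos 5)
Your proposal is correct, and the first two thirds of it track the paper's own proof: the containment $\mathcal{M}\subseteq\mathcal{N}$ via Lemma~\ref{lemma:one} and the per-term upper bound is exactly the paper's opening step, and your induction along the chain $\lambda_1,\ldots,\lambda_T$ (one coordinate determines the codeword of each maximal margin, hence determines all of $S_t$ on the support) is a more explicit rendering of the paper's ``there is exactly one compatible choice of the $\lambda_t$-marginal'' argument; the extra care you flag as the main obstacle does go through, and arguably your version is tighter than the paper's. Where you genuinely diverge is the strict-superset direction. The paper splits into two cases (an uncovered index, or a disconnection into disjoint subfamilies) and, in the second case, argues by dimension: the $I_\Lambda$-maximizers contain a $2$-way transportation polytope of dimension $(N-1)^2>0$, whereas $\mathcal{M}$ is a finite set, so the two cannot coincide. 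You instead fold both failure modes into a single partition $\{1,\ldots,n\}=A\sqcup B$ refining $\Lambda$ and exhibit an explicit witness $p_A\otimes p_B$ whose $\Lambda$-margins are all maximal but whose multi-information is at most $(n-2)\log N<(n-1)\log N$ by additivity across independent blocks. Your route is more elementary (no facts about transportation polytope dimensions are needed) and unifies the case analysis; the paper's route buys the additional structural information that the maximizer set is positive-dimensional when $\Lambda$ fails to be a connected covering, which foreshadows the transportation-polytope description of the SFMI maximizers in Section~\ref{sec:5}. The partition equivalence you promise to verify is indeed routine and holds as you state it.
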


\begin{proof}[Proof of Theorem~\ref{thm:FMI}] 
    Denote $\mathcal{M}$ and $\mathcal{N}$ the sets of joint distributions that maximize the multi-information $I(X_1,\ldots, X_n)$ and the factorized multi-information $I_\Lambda(X_1,\ldots, X_n)$, respectively.  We prove first that $\mathcal{M}\subseteq\mathcal{N}$. 
    Since the $I_\Lambda$ is a sum of bounded non-negative terms, if each term is independently maximized, then $I_\Lambda$ will be maximized. 
    Consider $p(X_1, \dots, X_n) \in \sM$. 
    Lemma~\ref{lemma:one} shows that each marginal $p(X_\l)$, $\l \in \L$, is a multi-information maximizer.  Therefore, $p\in\mathcal{N}$ and $\mathcal{M}\subseteq\mathcal{N}$. 

\smallskip
    
    We proceed with the \emph{if} statement. 
    Assume that $\Lambda$ is a connected covering of $\{1,\ldots, n\}$. 
    We need to show that $\mathcal{N}\subseteq\mathcal{M}$. 
    Suppose that $p\in\mathcal{N}$. Since $\Lambda$ is a connected covering, there is a list of sets $\lambda_t$, $t=1,\ldots, T$, with $\lambda_t\cap(\cup_s^{t-1}\lambda_s)\neq \emptyset$ for $t=2,\ldots,T$, and $\cup_t\lambda_t=\{1,\ldots, n\}$. 
    By the above arguments, each $\l_t$-marginal has maximum multi-information. 
    This means that each marginal $p(X_{\l_t})$ is one of $N!^{|\lambda_t|-1}$ possible distributions (this is the number of $N$-ary codes of cardinality $N$, minimum distance $|\lambda_t|$, and length $|\lambda_t|$). 
    Each marginal fixes the joint states of $|\lambda_t|$ variables to one of $N$ possible choices. 
    Marginals with overlapping variables need to be compatible on the overlap. 
    For a given choice of the $\lambda_s$-marginals, $s=1,\ldots, t-1$, there is exactly one compatible choice of the $\lambda_t$-marginal. 
    Since each variable appears in at least one marginal and there is a sequence of overlapping marginals that cover all variables, the joint states of all variables are fixed to one of $N$ possible strings which have distance $n$ from each other. 
    Therefore, $p\in\mathcal{M}$ and $\mathcal{N}\subseteq\mathcal{M}$. 

\smallskip
    
    It remains to show the \emph{only if} statement. 
    For this, we need to show that $\mathcal{M}\neq \mathcal{N}$ if $\Lambda$ is not a connected covering. If $\Lambda$ is not a connected covering, then one of the following is true: 
    1) there is an $i\in\{1,\ldots, n\}$ which is not contained in any $\lambda\in\Lambda$, or 
    2) $\Lambda$ consists of two (or more) subfamilies of sets that are mutually disjoint. 

    In the first case, $I_\Lambda$ is independent of the marginal distribution of $X_i$. On the other hand, for any maximizer of the multi-information the marginal of $X_i$ is a uniform distribution. Hence, provided that $N>1$ (which we always assume), $\mathcal{M}\neq\mathcal{N}$. 

    In the second case, let $\lambda_1$ and $\lambda_2$ be two disjoint subsets of $\{1,\ldots, n\}$ which contain all sets from the family $\Lambda$. If $\lambda_1$ or $\lambda_2$ only contains one point $i\in\{1,\ldots, n\}$, then $I_\Lambda$ is independent of the marginal distribution of $X_i$ (since $I(X_i)\equiv 0$), and we are in the same situation discussed in 1). 
    Assume therefore that $\lambda_1$ and $\lambda_2$ each contain at least two points. 
    By definition, $I_\Lambda$ only depends on the values of the marginal distributions $p_1$ of $(X_i)_{i\in\lambda_1}$ and $p_2$ of $(X_i)_{i\in\lambda_2}$. 
    Each choice of the two marginals is realized by a set of joint distributions that is known as a $2$-way transportation polytope. 
    If the margins take positive value over $m_1$ and $m_2$ entries, respectively, then the dimension of this polytope is $(m_1-1)(m_2-1)$. The dimension of such $2$-way transportation polytopes is known in the literature; see~\cite{yemelichev1984polytopes}. We will encounter a generalization later in this paper. 
    For any $p\in\mathcal{M}$, Lemma~\ref{lemma:one} shows that the margins $p_1$ and $p_2$ are maximizers of the multi-information, and so, by Proposition~\ref{prop:MI}, both have positive uniform value over $N$ entries. 
    Since $p\in\mathcal{N}$, $\mathcal{N}$ also contains a polytope of dimension $(N-1)(N-1)>0$. 
    On the other hand, $\mathcal{M}$ is a finite discrete set and has dimension zero. 
    Therefore, $\mathcal{M}\neq\mathcal{N}$. 
    This completes the proof. 
    \end{proof}

Theorem~\ref{thm:FMI} states that 
maximizing the multi-information of sufficiently many margins imposes enough restrictions on the joint distributions to ensure that they also have maximum multi-information. 
Moreover, it characterizes the minimal families of marginals that are sufficient. 

In particular, since the set of pairs of $\{1,\ldots, n\}$ defines an overlapping covering, we have the following corollary. 
\begin{corollary}[Maximizers of the FMI]
    \label{cor:FMI}
The maximizers of the FMI \eqref{eq:5} agree with the maximizers of the multi-information \eqref{eq:4}. 
\end{corollary}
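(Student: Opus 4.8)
The plan is to recognize $FMI$ as a special case of the $\Lambda$-factorized multi-information and then invoke Theorem~\ref{thm:FMI} directly. First I would observe that, up to a positive normalization constant, $FMI(X_1,\ldots,X_n)$ coincides with $I_\Lambda(X_1,\ldots,X_n)$ for $\Lambda$ the family of all two-element subsets of $\{1,\ldots,n\}$. Indeed, in \eqref{eq:5} the sum $\sum_{i\neq j} MI(X_i,X_j)$ runs over ordered pairs and so counts each unordered pair twice, whereas $I_\Lambda$ sums over the $\binom{n}{2}$ unordered pairs with $I(X_i,X_j)=MI(X_i,X_j)$ appearing once. Since the two functionals differ only by a fixed positive scalar, they share the same set of maximizers.

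Next I would check that the family of all pairs is a connected covering in the sense of Definition~\ref{definition:connected_covering}, which is all that the \emph{if} direction of Theorem~\ref{thm:FMI} requires. Here $n\geq 2$ is implicit, since otherwise $\binom{n}{2}=0$ and $FMI$ is undefined. I would exhibit the required list explicitly as the path $\lambda_t=\{t,t+1\}$ for $t=1,\ldots,n-1$. Then $\cup_{s=1}^{t-1}\lambda_s=\{1,\ldots,t\}$, so the overlap condition $\lambda_t\cap(\cup_{s=1}^{t-1}\lambda_s)=\{t\}\neq\emptyset$ holds for every $t\geq 2$, and $\cup_{t=1}^{n-1}\lambda_t=\{1,\ldots,n\}$ covers all indices.

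Having established that $\Lambda$ is a connected covering, the \emph{if} direction of Theorem~\ref{thm:FMI} immediately gives that the maximizers of $I_\Lambda$, and hence those of $FMI$, coincide with the maximizers of $I(X_1,\ldots,X_n)$. There is essentially no obstacle here beyond these two bookkeeping points: the agreement of the two functionals up to scaling, and the explicit construction of a covering path. The only subtlety worth flagging is the boundary behaviour in $n$, namely that $n=1$ makes $FMI$ undefined while $n=2$ reduces the covering to the single pair that already exhausts all variables.
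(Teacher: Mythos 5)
Your proposal is correct and follows exactly the route the paper takes: the paper justifies Corollary~\ref{cor:FMI} in one line by noting that the set of all pairs of $\{1,\ldots,n\}$ is a connected covering and then appealing to Theorem~\ref{thm:FMI}. Your additional bookkeeping (the normalization between the ordered-pair sum in \eqref{eq:5} and $I_\Lambda$, and the explicit path $\lambda_t=\{t,t+1\}$ witnessing the connected covering) is sound and merely makes explicit what the paper leaves implicit.
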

 
In fact, Theorem~\ref{thm:FMI} shows that specifying $n-1$ pair margins suffices when the pairs form a connected covering of $\{1,\dots,n\}$. 
In particular, the exponential family with sufficient statistics computing $n-1$ pair margins contains in its closure all maximizers of the multi-information. 
This relates to the result from~\cite[Theorem~5.1]{37261}, which shows that the closure of an exponential family of $n-1$ pure pair interactions contains all maximizers of the multi-information. 

\begin{example}
\label{ex:threebinary}
Consider the case of three binary variables, $\L = \{\{1,2\},\{2,3\}\}$, and the following two mutual information maximizing marginals: 
        \begin{align} 
            p(X_1,X_2) =& \frac12 (\d_{00} + \d_{11}), \label{eq:exmp1}\\
            p(X_2,X_3) =& \frac12 (\d_{00} + \d_{11}). \label{eq:exmp2}
        \end{align}
These marginals fix the joint states of $(X_1,X_2,X_3)$ in the following way. 
Write the joint distribution as a convex combination of point distributions,  
        \bq
            p(X_1,X_2,X_3) = p_{000}\d_{000} + p_{001}\d_{001} + \cdots + p_{111}\d_{111}.
        \eq
        Eqn.~\eqref{eq:exmp1} imposes following linear constraints on the joint distribution: 
        \begin{align}
        \begin{split}
            &p_{000} + p_{001} = \tfrac12,\quad
            p_{110} + p_{111} = \tfrac12,\\
            &p_{010} + p_{011} = 0,\quad 
            p_{100} + p_{101} = 0.
            \end{split}
        \end{align}
        Eqn.~\eqref{eq:exmp2} imposes following linear constraints: 
        \begin{align}
        \begin{split}
            &p_{000} + p_{010} = \tfrac12,\quad
            p_{101} + p_{111} = \tfrac12,\\
            &p_{001} + p_{011} = 0,\quad
            p_{100} + p_{110} = 0. 
            \end{split}
        \end{align}
Since probabilities are non-negative, the last two constraints in each case imply $p_{010} = p_{011} = p_{100} = p_{101} = 0$ and  $p_{001} = p_{011} = p_{100} = p_{110} = 0$. 
Therefore, it must be that $p_{000} = p_{111} = \frac12$, showing that specifying two overlapping marginal distributions to be maximizers is sufficient to conclude that the joint distribution is also a multi-information maximizer.
\end{example}

When $\Lambda$ is not a connected covering, the set of maximizers of $I_\Lambda$ has dimension larger than zero. In this case, the exponential family with sufficient statistics computing $\Lambda$ margins still contains in its closure one joint distribution that expresses each choice of the margins, and in particular it contains maximizers of $I_\Lambda$, but in general it does not contain all of them. 
We will discuss the setting of not connected coverings in more detail in the next Section~\ref{sec:5}. 
We conclude this section with the following remark. 

\begin{remark}
We observe that, while there always exists a joint distribution each of whose marginals maximizes the multi-information, in general there are compatibility requirements among the marginals. 
For example, choosing $p(X_1,X_2) = \frac12(\d_{00}+\d_{11})$ and $p(X_2,X_3) = \frac12 (\d_{01} + \d_{10})$ means that $p(X_1,X_3)$ must be exactly $\frac12 (\d_{01} + \d_{10})$. 

Verifying that a certain specification of margins is indeed feasible, meaning that there exists a joint distribution with those margins, is known as the margin specification problem. 
Given a family $\Lambda$ of subsets of $\{1,\ldots, n\}$, the set of margins $(p_\lambda)_{\lambda\in\Lambda}$ that are feasible, is known in the literature as the $\Lambda$-marginal polytope, which can be obtained as the convex hull of the vectors 
\begin{equation}
F(x) = (F_{\lambda,\tilde x_\lambda}(x))_{\lambda\in\Lambda, \tilde x_\lambda\in\mathcal{X}_\lambda}, \quad x\in\mathcal{X}, 
\end{equation}
where for each $\lambda$ and $\tilde x_\lambda$ there is a coordinate taking value one if $x_\lambda = \tilde x_\lambda$ and value zero otherwise. 
These vectors form the columns of a matrix of sufficient statistics, computing the $\Lambda$-margins, for the exponential family known as the $\Lambda$-interaction model. 

For a given vector $(p_\lambda)_\lambda$ of $\lambda$-margins, $\lambda\in\Lambda$, the set of joint distributions that express these margins is known as a linear $\Lambda$-family at $p$ (where $p$ is a compatible joint distribution)~\cite{Rauh11:Thesis}, 
or also as the $n$-way transportation polytope $T(p_\lambda,\lambda\in\Lambda)$~\cite{yemelichev1984polytopes}. 
This is obtained by intersecting the simplex of joint probability distributions with the orthogonal complement of the row span of $F$ at $p$. 
We will encounter this type of structures in the next section. More details are provided in Appendix~\ref{appendix:transportation_polytopes}. 
\end{remark}

\section{Maximizers of the Factorized Mutual Information}\label{sec:5}

We have seen that the maximizers of any factorized multi-information measure with connected covering, like the ones shown in Figure~\ref{fig:my_label}, coincide with the  maximizers of the multi-information. 
These distributions are very different from the maximizers of the mutual information of two random vectors of length $n\geq2$. 
If we want to maximize $MI(\bX,\bY)$ by maximizing a factorized measure $I_\Lambda$, then we should choose a $\Lambda$ that is not a connected covering. 

\begin{figure}
    \centering
    \scalebox{.8}{    
    \begin{tikzpicture}
    \foreach \x in {1,2,3}{ 
    \node[circle, fill = gray!10, inner sep=0pt, minimum size=.75cm, draw=black, label=center:$X_{\x}$] (X\x) at (0,\x) {};
    \node[circle, fill = gray!0, inner sep=0pt, minimum size=.75cm, draw=black, label=center:$Y_{\x}$] (Y\x) at (2,\x) {};}
    
    \foreach \x in {1,2,3}{ 
    \foreach \y in {1,2,3}{ 
    \draw[-,shorten >= 2pt, shorten <= 2pt] (X\x) -- (Y\y) node[midway,right]{};}}
    \end{tikzpicture}
\qquad
    \begin{tikzpicture}
    \foreach \x in {1,2,3}{ 
    \node[circle, fill = gray!10, inner sep=0pt, minimum size=.75cm, draw=black, label=center:$X_{\x}$] (X\x) at (0,\x) {};
    \node[circle, fill = gray!0, inner sep=0pt, minimum size=.75cm, draw=black, label=center:$Y_{\x}$] (Y\x) at (2,\x) {};}
    
    \foreach \x in {1,2,3}{ 
    \draw[-,shorten >= 2pt, shorten <= 2pt] (X\x) -- (Y\x) node[midway,right]{};}
    \draw[-,shorten >= 2pt, shorten <= 2pt] (X1) -- (Y2) node[midway,right]{};
    \draw[-,shorten >= 2pt, shorten <= 2pt] (X2) -- (Y3) node[midway,right]{};
    
    \end{tikzpicture}
}
    \caption{At first sight, the factorized measures with these graphs might appear to be closely related to the mutual information $MI({\bf X},{\bf Y})$. However, by Theorem~\ref{thm:FMI} the maximizers coincide with the maximizers of the multi-information. As illustrated by Example~\ref{example:four}, they differ significantly from the maximizers of $MI({\bf X},{\bf Y})$.}
    \label{fig:my_label}
\end{figure}
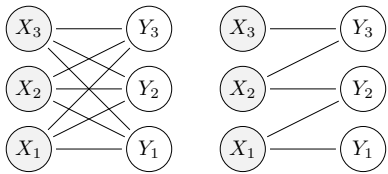

The SFMI \eqref{eq:7} only considers the mutual information of pairs $(X_i,Y_i)$, $i=1,\ldots, n$, which is not a connected covering. 
Theorem~\ref{thm:FMI} shows that the SFMI maximizers are a strict superset of the multi-information maximizers: 
    \begin{corollary} \label{cor:13}
Let $n\geq 2$. The maximizers of $I(X_1,\ldots, X_n,Y_1,\ldots, Y_n)$ are a strict subset of the maximizers of $SFMI(\bX,\bY)$.
    \end{corollary}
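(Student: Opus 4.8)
The plan is to recognize $SFMI(\bX,\bY)$ as an instance of the factorized multi-information $I_\Lambda$ over the full list of $2n$ variables, and then apply Theorem~\ref{thm:FMI} verbatim. First I would regard $(X_1,\ldots,X_n,Y_1,\ldots,Y_n)$ as a single collection of $2n$ $N$-ary variables and take $\Lambda = \{\{X_i,Y_i\} : i=1,\ldots,n\}$, the family of the $n$ pairs joining the $i$th coordinate of $\bX$ to the $i$th coordinate of $\bY$. Because each $\lambda = \{X_i,Y_i\}$ has cardinality two, $I((X_i,Y_i)) = MI(X_i,Y_i)$, and since $|\Lambda| = n$ the normalizations match, giving $I_\Lambda(X_1,\ldots,Y_n) = \frac{1}{n}\sum_{i=1}^n MI(X_i,Y_i) = SFMI(\bX,\bY)$ identically.

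Second, I would verify the hypothesis of Theorem~\ref{thm:FMI}, namely whether $\Lambda$ is a connected covering. The pairs $\lambda_i = \{X_i,Y_i\}$ are pairwise disjoint, since $X_i$ and $Y_i$ each occur in $\lambda_i$ alone. Hence for $n \geq 2$ the family $\Lambda$ splits into at least two mutually disjoint subfamilies, so it is not a connected covering; this is exactly case~2 in the \emph{only if} part of the proof of Theorem~\ref{thm:FMI}, and moreover one can group the pairs so that both disjoint blocks contain at least two points (for instance $\lambda_1 = \{X_1,Y_1\}$ and $\lambda_2 = \cup_{i=2}^n\{X_i,Y_i\}$). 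For $n=1$ the single pair is trivially connected, which is why the statement excludes that case.

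Third, I would invoke the ``otherwise'' branch of Theorem~\ref{thm:FMI}: since $\Lambda$ is not a connected covering, the maximizers of $I_\Lambda = SFMI(\bX,\bY)$ form a strict superset of the maximizers of $I(X_1,\ldots,X_n,Y_1,\ldots,Y_n)$, which is precisely the assertion that the multi-information maximizers are a strict subset of the SFMI maximizers. The inclusion itself re-uses $\mathcal{M}\subseteq\mathcal{N}$ (via Lemma~\ref{lemma:one}), while strictness comes from the transportation polytope of positive dimension $(N-1)^2$ produced by the two disjoint two-point blocks, contrasted against the finite, zero-dimensional set of multi-information maximizers.

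I do not anticipate a real obstacle here: the entire content is in correctly matching $SFMI$ to $I_\Lambda$ and checking disjointness of the pairs. The one point I would be careful to flag is that the multi-information appearing in Theorem~\ref{thm:FMI} is the joint multi-information of all $2n$ variables, and must not be conflated with the two-block mutual information $MI(\bX,\bY)$; keeping that distinction explicit is what makes the one-line application unambiguous.
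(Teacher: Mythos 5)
Your proposal is correct and follows exactly the paper's route: the paper also obtains Corollary~\ref{cor:13} as an immediate application of Theorem~\ref{thm:FMI}, observing that $SFMI(\bX,\bY)=I_\Lambda$ for $\Lambda$ the perfect matching $\{\{X_i,Y_i\}\}_{i=1}^n$ on the $2n$ variables, which is not a connected covering when $n\geq 2$. Your added care in distinguishing $I(X_1,\ldots,Y_n)$ from $MI(\bX,\bY)$ and in exhibiting the two disjoint blocks for case~2 of the theorem's \emph{only if} argument is consistent with, and slightly more explicit than, the paper's one-line justification.
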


In the following we present a characterization of the set of maximizers of SFMI. 
We start with the next theorem, which describes the number of connected components, dimension, and support sets. 

\begin{theorem}[Maximizers of SFMI vs. I] 
\label{theorem:12}
The set of maximizers of the SFMI is a superset of the maximizers of the multi-information. 
It consists of the joint distributions with marginals $p(X_i,Y_i)$ maximizing $MI(X_i,Y_i)$, for $i=1,\ldots, n$. 
This is the disjoint union of $N!^n$ transportation polytopes of dimension $N^n-1 - n(N-1)$ and whose interiors consist of distributions with support of cardinality $N^n$. 
\end{theorem}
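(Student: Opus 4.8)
The plan is to reduce the problem to a statement about a single multi-way transportation polytope, and then assemble the global picture by ranging over the finitely many admissible choices of pair marginals.

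First I would bound the SFMI from above. Since $SFMI(\bX,\bY) = \frac1n\sum_{i=1}^n MI(X_i,Y_i)$ is an average of nonnegative terms, each bounded by $\log N$ (Corollary~\ref{cor:one}), it satisfies $SFMI(\bX,\bY)\leq \log N$, with equality exactly when every term is maximal. Hence, by the same argument as in the proof of Theorem~\ref{thm:FMI}, a joint distribution $p$ maximizes the SFMI if and only if each pair marginal $p(X_i,Y_i)$ maximizes $MI(X_i,Y_i)$; this is the claimed characterization. By Corollary~\ref{cor:one}, each such maximizing marginal is the uniform distribution on the graph of a bijection $\pi_i$ between the value sets of $X_i$ and $Y_i$, and there are $N!$ of them. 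The superset claim then follows from Lemma~\ref{lemma:one}: if $p$ maximizes $I(X_1,\ldots,X_n,Y_1,\ldots,Y_n)$ then each pair marginal maximizes the corresponding mutual information, so $p$ also maximizes the SFMI.

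Next I would organize the maximizer set by its marginals. A choice of maximizing marginals is a tuple $(\pi_1,\ldots,\pi_n)$ of bijections, giving $N!^n$ possibilities, each feasible since it is realized by the product $p(X_1,Y_1)\cdots p(X_n,Y_n)$. Distributions with different marginal tuples are distinct, so the maximizer set is the disjoint union over the $N!^n$ tuples of the set of joint distributions realizing that tuple, and each such set is a transportation polytope. I would then fix one tuple and identify the polytope concretely. Because $p(X_i,Y_i)$ is supported on the graph of $\pi_i$, marginalization forces $p=0$ on every joint state with $y_i\neq\pi_i(x_i)$ for some $i$; thus the support lies in $\{(\bx,\pi_1(x_1),\ldots,\pi_n(x_n))\colon \bx\in\prod_i\sX_i\}$, a set of $N^n$ states indexed by $\bx$. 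Writing $q(\bx)$ for the mass on that state, the marginal conditions reduce exactly to requiring all $n$ axis marginals of $q$ to be uniform, i.e.\ $\sum_{\bx\colon x_i=a}q(\bx)=1/N$ for all $i$ and all values $a$. This is the $n$-way transportation polytope of $N\times\cdots\times N$ arrays with uniform one-dimensional margins.

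Finally I would compute the dimension and the generic support. The ambient simplex on $N^n$ states has dimension $N^n-1$, and the $nN$ margin equations have exactly $n-1$ dependencies (the $n$ total-mass relations, obtained by summing each axis's $N$ constraints, all coincide), so their rank is $n(N-1)+1$; this yields dimension $N^n-1-n(N-1)$, matching the known formula for multi-way transportation polytopes (Appendix~\ref{appendix:transportation_polytopes}). The uniform array $q\equiv N^{-n}$ is feasible with all entries positive, so it lies in the relative interior, confirming full dimensionality within the affine solution space; since no cell is forced to vanish by the constraints, the relative interior consists precisely of the full-support arrays, giving support of cardinality $N^n$. The main obstacle is the rank computation, namely verifying that the only linear dependencies among the $nN$ margin constraints are the $n-1$ relations identifying the total masses, equivalently that the marginal map surjects onto the $(nN-(n-1))$-dimensional space of compatible margin tuples. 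I expect to certify this surjectivity via the rank-one construction $q(\bx)=\prod_i\mu_i(x_i)$ for compatible positive margins, which pins down the rank and hence the dimension formula.
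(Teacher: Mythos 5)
Your proposal is correct and follows essentially the same route as the paper: reduce SFMI maximization to simultaneous maximization of each pair marginal, observe that fixing the $n$ maximizing pair margins kills all but $N^n$ entries and leaves $n(N-1)+1$ independent linear constraints, and count the $N!^n$ disjoint pieces. The only (minor) difference is how the rank of the constraint system is certified — the paper invokes the basis of indicator functions from H\c{o}\c{s}ten--Sullivan, whereas you propose surjectivity of the marginal map via rank-one arrays — and both arguments are valid.
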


The SFMI maximizing polytopes are the sets of joint probability distributions that are compatible with the specification of $n$ pair margins as distributions with maximum mutual information. 
Since the pair margins have disjoint variables, the polytopes can be regarded as orthogonal linear families of an independence model of $n$ $N^2$-ary variables. 

But since the specified margins are uniform distributions with support of cardinality $N$, 
the polytopes have the structure of orthogonal linear families of an independence model of $n$ $N$-ary variables at the uniform distribution. In this perspective, the elementary events of the $i$th variable are pairs $(x_i,y_i)$ forming an $N$-ary code of length $2$, minimum Hamming distance $2$, and cardinality $N$. 

\begin{proof}[Proof of Theorem \ref{theorem:12}]
If we consider a subset $\lambda$ of variables, fixing a marginal distribution $p(X_\lambda)$ with support $\tilde \sX_\lambda\subseteq\sX_\lambda$ defines a linear set in the joint probability simplex of dimension $|\tilde \sX_\lambda|(|\sX_{\lambda^c}|-1)$. This is namely the set of distributions of the form $p(X_\lambda)q(X_{\lambda^c}|X_{\lambda})$. 
The dimension count comes from the fact that for each $x_{\lambda}\in \tilde \sX_\lambda$, the 
conditional distribution $q(X_{\lambda^c}|X_{\lambda}=x_\lambda)$ can be chosen independently as an arbitrary point in a $(|\sX_{\lambda^c}|-1)$ dimensional simplex. 

For variables $(X_1,\ldots, X_n,Y_1,\ldots, Y_n)$, the set of maximizers of the MI of a given pair $(X_i,Y_i)$ corresponds to a set $\sS_i = \cup_{j=1}^{N!} \sS_{i,j}$ in the joint probability simplex which is the disjoint union of $N!$ linear spaces of dimension $N(N^{2n-2}-1)$. 
Indeed, there are $N!$ maximizers of the marginal mutual information. Each of them  corresponds to fixing a marginal distribution $p(X_i, Y_i)$ with support of cardinality $N$, which defines a set $\sS_{i,j}$ in the joint probability simplex. 
In terms of the above discussion, $\sX_\lambda = \sX_i\times \sY_i$ and $|\tilde \sX_\lambda| = N$. 
The set $\mathcal{S}$ of maximizers of the SFMI is the intersection of $n$ such sets, $\sS=\cap_{i=1}^n \,\sS_i$, one corresponding to each pair marginal. 
Each marginal implies that $(N^2-N)N^{n-2}$ entries of the joint distribution vanish. 
Overall, only $N^n$ entries of the joint distribution can have nonzero value. 
Each marginal implies $N$ linear constraints of the form $\frac{1}{N}=p(X_i=x_i,Y_i=y_i)=\sum_{x_{\hat i},y_{\hat i}}p(x_1,\ldots, x_n,y_1,\ldots,y_n)$. 
Each marginal contributes $(N-1)$ independent constraints, and have one shared constraint, namely that the sum of entries is $1$. The number of independent constraints can be derived from the fact that $\mathbf{1}_{\lambda, x_\lambda}$, $\lambda\subseteq\Lambda$, $x_\lambda \in\prod_{i\in\lambda}(\mathcal{X}_i\setminus\{0\})$,  including $\mathbf{1}_\emptyset\equiv1$, is a basis of the set of functions $\sum_{\lambda\in\Lambda}f_\lambda(x)$; see~\cite{Hosten:2002:GBP:636628.636633}. 

We have $N^n$ positive entries satisfying $n(N-1)+1$ independent linear equality constraints, which gives the desired dimension $N^n - n(N-1)-1$. 

In terms of the number of pieces, all pieces defined by one marginal intersect with all pieces defined by any other marginal. Since each pair marginal defines $N!$ pieces, the total number of pieces is $N!^n$. 

The fact that the set of SFMI maximizers is a superset of the set of FMI maximizers follows from the fact that it is defined by a subset of the affine constraints that define the FMI maximizers. 
\end{proof}

The next question is whether the transportation polytopes maximizing the SFMI contain some of the maximizers of the MI.  Transportation polytopes are not very well understood in general.  We obtain the following characterization. 

\begin{theorem}[Maximizers of SFMI vs. MI] \label{theorem:13}
Each of the $N!^n$ transportation polytopes maximizing the SFMI contains $N!^{n-1}$ vertices which are maximizers of the multi-information. 
These vertices come in $(N-1)!^{n-1}$ disjoint sets,
each set consisting of $N^{n-1}$ affinely independent vertices and having the same centroid, which is a maximizer of $MI({\bf X},{\bf Y})$. 
\end{theorem}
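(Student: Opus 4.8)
The plan is to work inside a single one of the $N!^n$ transportation polytopes produced by Theorem~\ref{theorem:12}. Such a polytope is fixed by choosing, for each $i$, one of the $N!$ maximizers of $MI(X_i,Y_i)$ from Corollary~\ref{cor:one}; by that corollary each choice is a permutation $\sigma_i\in S_N$ whose pair marginal is uniform on $\{(a,\sigma_i(a)):a\in\{0,\dots,N-1\}\}$. First I would identify which points of this polytope maximize the full multi-information. Every distribution in the polytope has support of size at least $N$ (its $(X_i,Y_i)$-marginal already needs $N$ support points), and a distribution whose support has the minimal size $N$ is a vertex, since on any size-$N$ support the pair-marginal constraints force its weights to be uniform. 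Such a minimal vertex must place weight $1/N$ on $N$ joint states whose $i$-th pair projections run exactly over $\{(a,\sigma_i(a))\}$; writing the $X_i$-value of the state indexed by $a$ as $\rho_{X_i}(a)$ with $\rho_{X_i}\in S_N$ forces its $Y_i$-value to be $\sigma_i\rho_{X_i}(a)$. Hence these vertices are precisely the uniform distributions on the permutation diagonals
\[
\{(\rho_{X_1}(a),\dots,\rho_{X_n}(a),\,\sigma_1\rho_{X_1}(a),\dots,\sigma_n\rho_{X_n}(a)):a\in\Z_N\},
\]
which by Proposition~\ref{prop:MI} are exactly the multi-information maximizers lying in the polytope. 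Counting them modulo the common relabeling $\rho_{X_i}\mapsto\rho_{X_i}\tau$ of the base index gives $N!^{n-1}$, as claimed.

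Next I would reduce the two remaining assertions to a combinatorial statement about $\sX=\{0,\dots,N-1\}^n$. Each such vertex is determined by its $\bX$-support $L=\{(\rho_{X_1}(a),\dots,\rho_{X_n}(a)):a\in\Z_N\}$, a permutation diagonal of $\sX$; distinct vertices give distinct diagonals, and all $N!^{n-1}$ diagonals arise. Moreover the $\bY$-support equals $\Sigma(L)$, where $\Sigma=(\sigma_1,\dots,\sigma_n)$ is the fixed coordinatewise bijection of $\sX$ onto $\sY$. Consequently, if a family of these vertices has pairwise disjoint $\bX$-supports then their full supports are disjoint, so the corresponding probability vectors are linearly, hence affinely, independent; and if their $\bX$-supports partition $\sX$, then $\Sigma$ carries the $\bY$-supports to a partition of $\sY$, so the centroid of the family is the uniform distribution on $N^n$ states pairing each $\bx\in\sX$ with a distinct $\by\in\sY$. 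By Corollary~\ref{cor:MI} this centroid is a maximizer of $MI(\bX,\bY)$. Everything thus follows once the $N!^{n-1}$ diagonals are organized into $(N-1)!^{n-1}$ parallel classes, each consisting of $N^{n-1}$ pairwise disjoint diagonals covering $\sX$.

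The construction of these parallel classes is the heart of the argument and is exactly what Section~\ref{sec:codes} provides. I would let the translation group $\Z_N^{n-1}$ act on diagonals by shifting the last $n-1$ coordinates: for $\mathbf c=(c_2,\dots,c_n)$, send $L(\rho_{X_1},\dots,\rho_{X_n})$ to $L(\rho_{X_1},\,t_{c_2}\rho_{X_2},\dots,t_{c_n}\rho_{X_n})$, where $t_{c}(a)=a+c\bmod N$. Because the first coordinate $\rho_{X_1}$ is a bijection, for each point of $\sX$ there is exactly one shift $\mathbf c$ placing it on a given orbit representative, so every orbit partitions $\sX$ into $N^{n-1}$ disjoint diagonals; and the action is free, so each orbit has exactly $N^{n-1}$ members. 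The number of orbits is therefore $N!^{n-1}/N^{n-1}=(N-1)!^{n-1}$, and these orbits partition the set of all diagonals. Declaring each orbit a parallel class and combining with the previous paragraph yields the theorem.

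The routine parts are the support bookkeeping (already essentially carried out in Theorem~\ref{theorem:12}) and the linear-algebra observation that disjointly supported distributions are affinely independent. The genuine obstacle is the combinatorial resolution of the last paragraph: producing an explicit partition of the permutation diagonals into parallel classes matching the exact counts $N^{n-1}$ and $(N-1)!^{n-1}$. The translation-group action makes freeness, the covering property, and the orbit count simultaneous and transparent, which is why I would route the whole proof through it rather than attempting an ad hoc grouping.
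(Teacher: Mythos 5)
Your proposal is correct and follows essentially the same route as the paper: it identifies the minimal-support (size-$N$) points of each polytope as the uniform distributions on permutation diagonals, hence the $N!^{n-1}$ multi-information maximizing vertices, and then groups them into $(N-1)!^{n-1}$ disjointly supported families of $N^{n-1}$ via cyclic shifts of the last $n-1$ coordinates --- which is exactly the construction in the paper's Propositions~\ref{proposition:partition} and~\ref{proposition:numberofpartitions} --- before taking centroids and invoking Corollary~\ref{cor:MI}. Your orbit/freeness packaging of the shift argument is a slightly cleaner way to get the counts simultaneously, but it is not a different method.
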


To put this in perspective, recall that 
$I(X_1,Y_1,\ldots, X_n,Y_n)$ has $N!^{2n-1}$ maximizers, and $MI(\bX,\bY)$ has $N^n!$ maximizers. 
Moreover, the dimension of each of the transportation polytopes in question is $N^n-1-n(N-1)$. 
The total number of vertices of a transportation polytope is not known in general. 
For the special type of transportation polytopes that we consider here, the theorem gives a lower bound $N!^{n-1}$. 

\begin{proof}[Proof of Theorem~\ref{theorem:13}]
The proof is by construction. 
Consider one of the $N!^n$ polytopes, $\sS = \cap_{i=1}^n \sS_{i,j_i}$ for a fixed choice of $j_i \in\{1,\ldots, N!\}$, for $i=1,\ldots,n$. 
Let $p_{i,j_i}$ be the corresponding $(X_i,Y_i)$ margin, which is a uniform distribution on $N$ strings of length two and minimum distance two, out of the $N^2$ possible strings of length two. 
Denote these $N$ strings by $z^{k}_i = (x^{k}_i,y^{k}_i)$, $k = 1,\ldots, N$. 
In the following we will write the joint states of all variables as $((x_1,y_1),\ldots, (x_n,y_n))$. 

We claim that $\sS$ contains 
$\frac{1}{N}\sum_{k=1}^N \delta_{z_1^{\pi_1(k)}\cdots z_n^{\pi_n(k)}}$, for each choice of an ordering $\pi_i$ of $\{1,\ldots,N\}$, for each $i=1,\ldots, n$. 
Indeed, for each $i=1,\ldots, n$ the $i$th margin is 
$\frac{1}{N}\sum_k \delta_{z_i^k}=p_{i,j_i}$. 

Each of these points is a distinct maximizer of the multi-information of $(Z_1,\ldots, Z_n)$. Indeed, each of them is a uniform distribution on $N$ strings of length $n$ and minimum distance $n$ (with each $Z_i$ regarded as a single $N$ valued variable). 
They are also maximizers of the multi-information of $(X_1,Y_1,\ldots, X_n,Y_n)$, since any distinct $z_i^k$ and $z_i^{k'}$ from our list have, by definition, distance two when regarded as strings $(x^{k}_i,y^{k}_i)$ and $(x^{k'}_i,y^{k'}_i)$. 

Summarizing, $\sS$ contains $N!^{n-1}$ distinct maximizers of the multi-information. These points are also vertices of the polytope. 
Since they have support of cardinality $N$, they cannot be expressed as a non-trivial combination of maximizers of SFMI. Indeed, $N$ is the minimum cardinality of the support of any maximizer of SFMI (because this is the minimum cardinality of any maximizer of the mutual information of any pair margin). 
Moreover, any maximizer of SFMI with support of cardinality $N$ must be the uniform distribution over that support (because this is true for any maximizer of the MI of any pair margin).  

\smallskip

Next we show that $\sS$ contains $(N-1)!^{n-1}$ simplices which are each the convex hull of $N^{n-1}$ distinct vertices of $\sS$. 
Recall that $|\supp(\sS)|=\max\{|\supp(p)|\colon p\in\sS\} = N^n$. 
Because $\sS$ has $N!^{n-1}$ vertices which are multi-information maximizers, it follows that there are $N!^{n-1}$ codes of length $2n$, minimum distance $2n$, and cardinality $N$. These are the support sets of the vertices. 
One can find $N^{n-1}$ codes of this form which are disjoint and cover all strings (see Proposition~\ref{proposition:partition}). 
For $N^{n-1}$ disjoint codes, each of cardinality $N$, the union contains $N^n$ distinct strings and covers $\supp(\sS)$. 
One can find $\frac{N!^{n-1}}{N^{n-1}} = (N-1)!^{n-1}$ such collections of codes 
(see Proposition~\ref{proposition:numberofpartitions}), which correspond to the different simplices. 

Consider the vertices corresponding to a single collection of $N^{n-1}$ vertex support sets.  Any distribution written as a convex combination of these vertices is in $\sS$.  Since the support sets of these vertices are disjoint, it follows that a strictly convex combination will have support on $N^n$ vectors. 
Furthermore, the convex hull of these vertices can be seen as a simplex of dimension $N^{n-1}-1$.
At the center of the simplex where each vertex is given equal weight, one obtains the uniform distribution over the $N^n$ vectors of $\supp(\sS)$. 
Recall from the previous that the vectors in $\supp(\sS)$ are $N$-ary vectors of length $2n$ with minimum distance $2$, and therefore the uniform distribution over $N^n$ such vectors is a maximizer of $MI({\bf X},{\bf Y})$. 
Since each of the $(N-1)!^{n-1}$ simplices have the same support of $\supp(\sS)$, their centroids all coincide at a maximizer of $MI({\bf X},{\bf Y})$. 
\end{proof}

In particular, we note the following. 

\begin{corollary}
The maximizers of the multi-information are vertices of the polytopes of maximizers of the SFMI. 
\end{corollary}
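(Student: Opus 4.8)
The plan is to deduce this as a short consequence of Theorems~\ref{theorem:12} and~\ref{theorem:13}, since the essential content is already present in the construction used to prove Theorem~\ref{theorem:13}. First I would recall from Corollary~\ref{cor:13} that every maximizer $p$ of the multi-information $I(X_1,\ldots,X_n,Y_1,\ldots,Y_n)$ is a maximizer of the SFMI, so by Theorem~\ref{theorem:12} it lies in the union of the $N!^n$ transportation polytopes. Since that union is \emph{disjoint}, $p$ belongs to exactly one polytope $\sS$, and it then remains only to argue that $p$ is an extreme point of $\sS$.

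For the extreme-point argument I would use a minimal-support comparison. By Proposition~\ref{prop:MI}, regarding $(X_1,Y_1,\ldots,X_n,Y_n)$ as a length-$2n$, $N$-ary code, a multi-information maximizer $p$ is the uniform distribution on exactly $N$ strings, so $|\supp(p)| = N$. On the other hand, $N$ is the minimum possible support cardinality of any SFMI maximizer, because each pair margin $p(X_i,Y_i)$ must maximize $MI(X_i,Y_i)$ and hence be supported on at least $N$ strings. Suppose then that $p = \tfrac12(q+r)$ with $q,r\in\sS$ and $q\neq r$. Since $\supp(q),\supp(r)\subseteq\supp(p)$ we get $|\supp(q)|,|\supp(r)|\le N$, forcing $\supp(q)=\supp(r)=\supp(p)$.

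The step I expect to carry the real weight is the uniqueness claim: an SFMI maximizer whose support has cardinality exactly $N$ must be the uniform distribution on that support. I would prove this by a projection/counting argument: each margin $p(X_i,Y_i)$ is uniform on $N$ atoms of mass $1/N$, so the projection of the $N$-element support onto the $(X_i,Y_i)$ coordinates is a bijection onto the margin's support, which forces every joint atom to have mass $1/N$. Given this, the $q$ and $r$ above are both the uniform distribution on $\supp(p)$, hence $q=r=p$, contradicting $q\neq r$; therefore $p$ is a vertex of $\sS$. The remaining details are bookkeeping already carried out in the proof of Theorem~\ref{theorem:13}.
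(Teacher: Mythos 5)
Your argument is correct and follows essentially the same route as the paper: the paper establishes this inside the proof of Theorem~\ref{theorem:13} by exactly the two facts you isolate, namely that $N$ is the minimum support cardinality of any SFMI maximizer and that an SFMI maximizer with support of cardinality $N$ must be uniform on that support, hence cannot be a non-trivial convex combination of points of the polytope. Your projection/bijection justification of the uniqueness step is a slightly more explicit version of the paper's parenthetical remark, but the proof is the same.
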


We note that the MI is independent of the ordering of the variables within the vectors $\bX$ and $\bY$. 
The SFMI measure that we have introduced corresponds to $I_\Lambda$ where $\Lambda$ is a specific pairing of variables in the two vectors. 
Each $\Lambda$ consisting of $n$ pairs that are a perfect matching between $X_1,\ldots, X_n$ and $Y_1,\ldots, Y_n$ defines a different SFMI measure. 
The situation is illustrated in Figure~\ref{fig:SFMIs}. 
We expect that for each of the $n!$ possible pairings, the corresponding SFMI measure will have among its maximizers, different subsets of maximizers of the MI that can be written as convex combinations of multi-information maximizers. 
At this point it remains an open question how the sets of MI maximizers that maximize different SFMI measures compare to each other, and how many MI maximizes total are maximizers of some SFMI measure. 

We illustrate the discussion of this section in the next example.
    \begin{example}
        \label{ex:1}
        Consider $n=2$ pairs of $N=2$ valued variables $(X_1,X_2,Y_1,Y_2)$. 
        The maximizers of the SFMI~\eqref{eq:7} are $N!^n=4$ polytopes of dimension $N^n-1-n(N-1)=1$. 
        With non-negative $\a,\b$ such that $\a + \b = 1$, the polytopes are 
        \begin{align}
        \begin{split}
            &\tfrac\a2(\d_{0000} + \d_{1111}) + \tfrac\b2(\d_{0101} + \d_{1010}), \label{eq:10}\\
            &\tfrac\a2(\d_{0001} + \d_{1110}) + \tfrac\b2(\d_{0100} + \d_{1011}), \\
            &\tfrac\a2(\d_{0010} + \d_{1101}) + \tfrac\b2(\d_{0111} + \d_{1000}), \\
            &\tfrac\a2(\d_{0011} + \d_{1100}) + \tfrac\b2(\d_{0110} + \d_{1001}) .
        \end{split}
        \end{align}
        A detailed derivation is provided in Appendix~\ref{sec:sfmiex}. 
        When $\a,\b$ are strictly positive, these distributions have support of cardinality $4$ and do not maximize multi-information.
        However we see that the vertices, with $\beta=0$ or $\alpha=0$, recover all $8$ maximizers of multi-information. 
        Additionally, setting $\a = \b = \frac12$ recovers $4$ out of the $24$ maximizers of $MI({\bf X},{\bf Y})$. 
        These are the four points marked with ${\ast}$ in Example~\ref{example:four}. 
        Notice that the SFMI maximizers with highest entropy are maximizers of $MI({\bf X},{\bf Y})$. 

\smallskip
 
Consider now an SFMI with a different assignment of $X$s and $Y$s, $SFMI' = \frac12\big(MI(X_1,Y_2) + MI(X_2,Y_1)\big)$.
With non-negative $\a,\b$ such that $\a + \b = 1,$ the polytopes of $SFMI'$ maximizers are
\begin{align}
        \begin{split}
            &\tfrac\a2(\d_{0000} + \d_{1111}) + \tfrac\b2(\d_{1001} + \d_{0110}),\label{eq:21}\\
            &\tfrac\a2(\d_{0010} + \d_{1101}) + \tfrac\b2(\d_{0100} + \d_{1011}),\\
            &\tfrac\a2(\d_{0001} + \d_{1110}) + \tfrac\b2(\d_{1000} + \d_{0111}),\\
            &\tfrac\a2(\d_{0011} + \d_{1100}) + \tfrac\b2(\d_{0101} + \d_{1010}).
        \end{split}
\end{align}
At the center of these polytopes where $\a = \b$, the maximizers marked ${\ast\ast}$ in Example~\ref{example:four} are found, demonstrating that distinct subsets of MI maximizers are contained in each of the $n!$ possible SFMI measures. 
\end{example}

\begin{figure}
    \centering
    \scalebox{.8}{
    \begin{tikzpicture}
    \foreach \x in {1,2,3}{ 
    \node[circle, fill = gray!10, inner sep=0pt, minimum size=.75cm, draw=black, label=center:$X_{\x}$] (X\x) at (0,\x) {};
    \node[circle, fill = gray!0, inner sep=0pt, minimum size=.75cm, draw=black, label=center:$Y_{\x}$] (Y\x) at (2,\x) {};}
    \foreach \x in {1,2,3}{ 
    \draw[-,shorten >= 2pt, shorten <= 2pt] (X\x) -- (Y\x) node[midway,right]{};}
    \end{tikzpicture}
\qquad
    \begin{tikzpicture}
    \foreach \x in {1,2,3}{ 
    \node[circle, fill = gray!10, inner sep=0pt, minimum size=.75cm, draw=black, label=center:$X_{\x}$] (X\x) at (0,\x) {};
    \node[circle, fill = gray!0, inner sep=0pt, minimum size=.75cm, draw=black, label=center:$Y_{\x}$] (Y\x) at (2,\x) {};}
    
    \draw[-,shorten >= 2pt, shorten <= 2pt] (X1) -- (Y2) node[midway,right]{};
    \draw[-,shorten >= 2pt, shorten <= 2pt] (X2) -- (Y3) node[midway,right]{};
    \draw[-,shorten >= 2pt, shorten <= 2pt] (X3) -- (Y1) node[midway,right]{};
    \end{tikzpicture}
\qquad
   \begin{tikzpicture}
    \foreach \x in {1,2,3}{ 
    \node[circle, fill = gray!10, inner sep=0pt, minimum size=.75cm, draw=black, label=center:$X_{\x}$] (X\x) at (0,\x) {};
    \node[circle, fill = gray!0, inner sep=0pt, minimum size=.75cm, draw=black, label=center:$Y_{\x}$] (Y\x) at (2,\x) {};}
    
    \draw[-,shorten >= 2pt, shorten <= 2pt] (X1) -- (Y3) node[midway,right]{};
    \draw[-,shorten >= 2pt, shorten <= 2pt] (X2) -- (Y1) node[midway,right]{};
    \draw[-,shorten >= 2pt, shorten <= 2pt] (X3) -- (Y2) node[midway,right]{};
    \end{tikzpicture}
}
    \caption{Illustration of various types of SFMI measures. }
    \label{fig:SFMIs}
\end{figure}
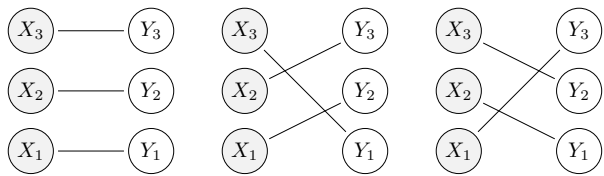    

We provide more examples in Appendix~\ref{sec:sfmiex}. 

\section{Codes and partitions}
\label{sec:codes}

In the proof of Theorem~\ref{theorem:13} we used certain properties of the set of $N$-ary strings of length $n$; specifically, we used that it can be partitioned into codes of minimum distance $n$ and cardinality $N$, in a number of different ways. 
In this section we state and prove these facts. 

\begin{proposition}
There are $N!^{n-1}$ $N$-ary codes of length $n$, minimum distance $n$, and cardinality $N$. 
\end{proposition}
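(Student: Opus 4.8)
The plan is to reduce a code of the stated type to a tuple of permutations, and then count the tuples. First I would establish the key structural fact: if a set $C$ of $N$ codewords over an alphabet of size $N$ has minimum Hamming distance $n$ (the full length), then in every coordinate $j\in\{1,\ldots,n\}$ the $N$ codewords take $N$ pairwise distinct values. Indeed, if two distinct codewords agreed in some coordinate, they would differ in at most $n-1$ coordinates, contradicting minimum distance $n$. Since the alphabet has exactly $N$ symbols, the assignment sending each codeword to its $j$th entry is a bijection onto the alphabet; equivalently, reading the $j$th coordinate down the list of codewords yields a permutation of the alphabet. (In particular the distance between any two codewords is then exactly $n$, so ``minimum distance $n$'' is equivalent to ``every coordinate map is injective''.)

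Next I would introduce a canonical labelling that turns the unordered code into an ordered object without overcounting. Identifying the alphabet with $\{1,\ldots,N\}$, the first coordinate takes $N$ distinct values, so I can list the codewords in the unique order in which their first entries appear as $1,2,\ldots,N$. Under this ordering the first coordinate map is forced to be the identity permutation, while each remaining coordinate $j=2,\ldots,n$ is described by an arbitrary permutation $\pi_j$ in the symmetric group $S_N$. This defines a map from codes to tuples $(\pi_2,\ldots,\pi_n)\in (S_N)^{n-1}$.

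Then I would verify this map is a bijection. It is surjective because any tuple $(\pi_2,\ldots,\pi_n)$ yields the set $\{(k,\pi_2(k),\ldots,\pi_n(k)):k=1,\ldots,N\}$, whose coordinate maps are all bijective, so by the structural fact it is a valid code with the required length, cardinality, and minimum distance. It is injective because the canonical ordering recovers each $\pi_j$ uniquely from the code. Hence the number of codes equals $\lvert (S_N)^{n-1}\rvert = (N!)^{n-1}$.

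The only real care needed is the bookkeeping of unordered versus ordered codes. As an alternative route, one could count ordered lists of codewords directly, obtaining $(N!)^{n}$ tuples $(\pi_1,\ldots,\pi_n)$, and then quotient by the diagonal right action of $S_N$ that relabels the codewords; since distinct normalized tuples give distinct codes, this action is free and the count of orbits is $(N!)^{n}/N! = (N!)^{n-1}$. I expect this normalization step — arguing that the relabelling action is free and that the canonical representative is well defined — to be the main (and only mild) obstacle, while everything else follows immediately from the distance condition.
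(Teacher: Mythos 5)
Your proof is correct and follows essentially the same route as the paper: both normalize the code by ordering its $N$ codewords according to their first coordinate and then observe that each remaining coordinate contributes an independent permutation of the alphabet, yielding $(N!)^{n-1}$. Your packaging of the count as a bijection with $(S_N)^{n-1}$ is a slightly cleaner bookkeeping of the paper's coordinate-by-coordinate choice count, but the underlying argument is identical.
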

\begin{proof}
An $N$-ary code of length $n$, minimum distance $n$, and cardinality $N$ corresponds to a perfect covering of a rectangular grid of size $N\times n$ by $N$ strings of length $n$. 
In order to count the total number, consider the grid, and the possible choices for how to place the first string. 
The first coordinate is $1$, for the second coordinate there are $N$ choices, for the third coordinate there are $N$ choices, etc. For the second string the first coordinate is $2$, and there are $N-1$ possible choices for each of the remaining coordinates. 
For the $N$th string there is only one choice left. 
So in total we have $N^{n-1} \cdot (N-1)^{n-1}\cdots 1^{n-1} = N!^{n-1}$. 
\end{proof}

\begin{proposition}
\label{proposition:bipartite}
The set of all $N^2$ edges of the full bipartite graph on $N\times N$ can be partitioned into $N$ sets of $N$ edges, each of which forms a perfect matching. 
\end{proposition}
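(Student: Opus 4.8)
The plan is to exhibit an explicit $1$-factorization of the complete bipartite graph $K_{N,N}$ using the cyclic group structure on the vertex labels. First I would label the left vertices and the right vertices each by the residues $\{0,1,\ldots,N-1\}$, identified with $\Z/N\Z$. For each shift $k\in\{0,1,\ldots,N-1\}$ I would define the edge set
\[
M_k = \{\,(i,\; (i+k) \bmod N) : i=0,1,\ldots,N-1\,\}.
\]
The claim is then that $M_0,\ldots,M_{N-1}$ are $N$ pairwise disjoint perfect matchings whose union is the full edge set of $K_{N,N}$.

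Next I would verify the three required properties. That each $M_k$ is a perfect matching follows because the map $i\mapsto (i+k)\bmod N$ is a bijection of $\Z/N\Z$, so every left vertex and every right vertex is incident to exactly one edge of $M_k$. Disjointness follows because if $(i,j)\in M_k\cap M_{k'}$ then $j\equiv i+k\equiv i+k'\pmod N$, forcing $k=k'$. Finally, the matchings cover all edges because an arbitrary edge $(i,j)$ lies in $M_{(j-i)\bmod N}$; and since there are $N$ matchings, each of size $N$, and they are pairwise disjoint, they account for exactly $N^2$ edges, which is the total number of edges of $K_{N,N}$.

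This argument is elementary, so I do not expect a serious obstacle; the only thing to be careful about is to check the three conditions (matching, disjointness, covering) together rather than in isolation, and to record that the count $N\cdot N = N^2$ confirms that no edge is omitted or double-counted. Equivalently, one could argue non-constructively by noting that $K_{N,N}$ is $N$-regular and bipartite and peeling off perfect matchings one at a time via Hall's theorem, but the explicit cyclic construction above is cleaner and is also the form most convenient for the partition statements that follow.
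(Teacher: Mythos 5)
Your proof is correct and uses essentially the same construction as the paper: cyclic shifts $i\mapsto (i+k)\bmod N$ defining the $N$ perfect matchings, with disjointness and covering verified directly. Your write-up is in fact slightly more explicit than the paper's about checking all three properties, but the underlying argument is identical.
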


\begin{proof}
Denote the edges by $(u,v)\in \{0,\ldots, N-1\}\times\{0,\ldots, N-1\}$. 
Each $t=1,\ldots, N$ defines a perfect matching 
$\{(u,u+t \mod (N))\colon u\in\{0,\ldots, N-1\}\}$. 
Each of these matchings uses one distinct edge, out of the $N$ edges that connect to the $u$-th left node, for each $u$. 
Hence all of these matchings have disjoint edge sets, and they, together, contain all edges from the full bipartite graph. 
\end{proof}

\begin{proposition}
\label{proposition:partition}
The set of all $N^n$ $N$-ary strings of length $n$ 
can be partitioned into $N^{n-1}$ codes of minimum distance $n$ and cardinality $N$. 
\end{proposition}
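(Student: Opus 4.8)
The plan is to exploit the fact that a code of length $n$, cardinality $N$, and minimum distance $n$ is precisely a set of $N$ strings that differ in \emph{every} one of their $n$ coordinates; equivalently, written as an $N\times n$ array of codewords, each column is a permutation of the alphabet. So partitioning all $N^n$ strings into $N^{n-1}$ such codes amounts to grouping the strings into blocks of $N$ within which any two disagree everywhere. I would give a direct group-theoretic construction, with an inductive alternative based on Proposition~\ref{proposition:bipartite}.

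For the direct construction, identify the alphabet with $\Z/N\Z$ and let $\mathbf{1}=(1,\ldots,1)$. Consider the additive action of the cyclic group $\Z/N\Z$ on $(\Z/N\Z)^n$ by $x\mapsto x+t\mathbf{1}$. This action is free: if $t\mathbf{1}=0$ then $t\equiv 0 \pmod N$, so every orbit has exactly $N$ elements, and the number of orbits is $N^{n}/N=N^{n-1}$. Each orbit $\{a+t\mathbf{1}\colon t\in\Z/N\Z\}$ is a code of the required type, because for $t\neq s$ the difference $(t-s)\mathbf{1}$ is nonzero in every coordinate (as $t-s\not\equiv 0\pmod N$), so the two strings disagree in all $n$ positions, giving minimum distance $n$. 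The orbits partition $(\Z/N\Z)^n$, which already proves the claim; note that no primality assumption on $N$ is needed, since only the cyclic group structure of $\Z/N\Z$ is used.

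Alternatively, I would argue by induction on $n$, which is the route that uses the preceding proposition. The base case $n=1$ is the single code consisting of all $N$ strings. For the inductive step, partition the length-$(n-1)$ strings into $N^{n-2}$ codes of minimum distance $n-1$ and cardinality $N$. Fix one such code $\{c_1,\ldots,c_N\}$ and form the $N^2$ length-$n$ strings obtained by appending each of the $N$ symbols to each $c_k$; these correspond to the edges of the complete bipartite graph on $\{c_1,\ldots,c_N\}\times\{0,\ldots,N-1\}$. A \emph{perfect matching} here selects, for each $c_k$, a last symbol, with all chosen symbols distinct; since the prefixes $c_k$ already differ in all $n-1$ positions, appending distinct last symbols makes the resulting strings differ in all $n$ positions, so each matching is a code of minimum distance $n$ and cardinality $N$. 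By Proposition~\ref{proposition:bipartite}, the $N^2$ edges decompose into $N$ perfect matchings, yielding $N$ disjoint such codes covering all $N^2$ extensions. Running over the $N^{n-2}$ base codes produces $N^{n-2}\cdot N=N^{n-1}$ codes partitioning all $N^n$ strings.

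Neither route presents a serious obstacle: the real content is the structural observation that minimum distance $n$ forces disagreement in every coordinate, so that the problem reduces to a free cyclic action or, inductively, to the edge-decomposition of $K_{N,N}$ supplied by Proposition~\ref{proposition:bipartite}. The one point to state carefully is the distance verification in the inductive step---that agreement in the appended coordinate is impossible precisely because the last symbols are chosen distinct---and, in the direct proof, that freeness of the action (hence orbits of size exactly $N$) holds for arbitrary $N$, not only prime $N$.
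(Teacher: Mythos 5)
Your direct construction is correct and is essentially the paper's own: the orbits of $x\mapsto x+t\mathbf{1}$ on $(\Z/N\Z)^n$ are exactly the codes $\{(u,u+t_2,\ldots,u+t_n)\bmod N\colon u=0,\ldots,N-1\}$ that the paper writes down, with your free-action argument being a slightly cleaner way to see that they are disjoint, of size $N$, and exhaustive. The inductive alternative via Proposition~\ref{proposition:bipartite} is a valid variant but yields the same partition and adds nothing essential.
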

\begin{proof} 
We want to partition the set of all $N^n$ strings into $N^{n-1}$ sets, each of which gives a perfect covering of a grid of size $N\times n$. 
For any $(t_2,\ldots, t_n)\in \{0,\ldots,N-1\}^{n-1}$, define a code as 
$\{(u,u+t_2 \mod(N), \ldots, u+t_n\mod(N))\colon u=0,\ldots, N-1 \}$. 
The proof of Proposition~\ref{proposition:bipartite} discusses how this construction gives a perfect matching between all nodes in column $i$ and column $i+1$, for all $i=1,\ldots, n-1$. 
There are $N^{n-1}$ of these assignments and they all build disjoint sets of strings. 
They, together, exhaust all $N^n$ strings, by the same reasoning of Proposition~\ref{proposition:bipartite}. 
\end{proof}

\begin{proposition}
\label{proposition:numberofpartitions}
There are $N!^{n-1}/N^{n-1}$ ways to partition the set of all $N^n$ $N$-ary strings of length $n$ in the form described in Proposition~\ref{proposition:partition}. 
\end{proposition}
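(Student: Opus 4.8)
The plan is to set up a clean bijective encoding of the codes and then count the partitions by a double-counting argument, showing that every code lies in exactly one partition of the prescribed form. First I would fix the alphabet to be $\Z_N$ with cyclic shift $s(u)=u+1 \bmod N$, and recall from the previous proposition that there are exactly $N!^{\,n-1}$ codes of length $n$, minimum distance $n$, and cardinality $N$. Each such code has a permutation of $\Z_N$ in every coordinate, so ordering its codewords by their first coordinate identifies it uniquely with a tuple $(\sigma_2,\dots,\sigma_n)\in (S_N)^{n-1}$, where the codeword with first coordinate $u$ is $(u,\sigma_2(u),\dots,\sigma_n(u))$. Under this encoding the partition built in Proposition~\ref{proposition:partition} is exactly the set of codes whose components all lie in the cyclic group $\langle s\rangle\cong\Z_N$, and a partition ``of the form described there'' is obtained by independently relabelling the alphabet in each coordinate $j=2,\dots,n$; equivalently, it is a product $H_2\times\cdots\times H_n$ of left cosets $H_j=\rho_j\langle s\rangle$ of the cyclic group in $S_N$, consisting of all codes $(\sigma_2,\dots,\sigma_n)$ with $\sigma_j\in H_j$.

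The key steps are then to verify that each such product of cosets really is a partition, and that each code belongs to exactly one of them. For disjointness, two codes $(\sigma_2,\dots,\sigma_n)$ and $(\sigma_2',\dots,\sigma_n')$ with $\sigma_j,\sigma_j'\in\rho_j\langle s\rangle$ share a codeword only if $\sigma_j(u)=\sigma_j'(u)$ for some common first coordinate $u$ and all $j$; writing $\sigma_j=\rho_j s^{a_j}$ and $\sigma_j'=\rho_j s^{a_j'}$, this equality reduces (after applying $\rho_j^{-1}$) to $u+a_j=u+a_j'$, hence to $a_j=a_j'$ for all $j$ independently of $u$, so the two codes coincide. Since a product of cosets contains $\prod_j|H_j|=N^{n-1}$ pairwise disjoint codes, each of cardinality $N$, together they cover all $N^n$ strings, so it is indeed a partition. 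Finally, a code $(\sigma_2,\dots,\sigma_n)$ lies in the product $H_2\times\cdots\times H_n$ precisely when $H_j=\sigma_j\langle s\rangle$ for every $j$; by Lagrange's theorem each $\sigma_j$ lies in a unique left coset of $\langle s\rangle$, so there is exactly one partition of the form containing the given code.

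Combining these facts, the partitions of the prescribed form partition the set of all $N!^{\,n-1}$ codes into blocks of size $N^{n-1}$, whence their number is $N!^{\,n-1}/N^{n-1}=(N-1)!^{\,n-1}$, as claimed. This also matches, and explains, the ``disjoint sets'' of vertices appearing in Theorem~\ref{theorem:13}. The main obstacle I anticipate is not the arithmetic but pinning down the right family of partitions: one must argue that ``the form described in Proposition~\ref{proposition:partition}'' corresponds to products of cosets of the fixed cyclic group $\langle s\rangle$, so that each code lies in a unique partition, rather than to arbitrary decompositions into such codes. The latter are strictly more numerous already for $N=4$, where there are $24$ decompositions (the $1$-factorizations of the bipartite graph from Proposition~\ref{proposition:bipartite}) but only $(N-1)!=6$ of the prescribed form.
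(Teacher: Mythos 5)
Your proof is correct and follows essentially the same route as the paper's: both identify the partitions of the prescribed form with products of cosets of the cyclic-shift subgroup $\langle s\rangle \leq S_N$ in each of the $n-1$ coordinates $j=2,\dots,n$, giving $(N!/N)^{n-1}=(N-1)!^{n-1}$. Your version is more careful than the paper's terse coset argument---in particular your closing observation (for $N=4$, $n=2$ there are $24$ one-factorizations of $K_{4,4}$ but only $(N-1)!=6$ of coset form) correctly pins down what ``partitions of the form described in Proposition~\ref{proposition:partition}'' must mean for each code to lie in a unique partition, a point the paper leaves implicit.
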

\begin{proof}
Consider again each $N$-ary code of length $n$, minimum distance $n$, and cardinality $N$ as a perfect covering of an $N\times n$ grid by strings. 
Consider the partition described in Proposition~\ref{proposition:partition} of the set of $N$-ary strings into $N^{n-1}$ such codes. 
Note that this partition is constructed by applying circular shifts, which define a subgroup of the symmetric group. 
Now we simply consider the cosets defined by the circular shifts. 
The cardinality of all cosets is equal. 
Two cosets are either equal or are disjoint. 
The number of distinct cosets is $N!/N = (N-1)!$. 
This gives a total of $(N-1)!^{n-1}$ partitions. 
\end{proof}

\section{Discussion}
\label{sec:6}

We formulated factorized versions of the mutual information and studied the sets of joint probability distributions that maximize them. As we observe, characterizing the maximizers of these measures translates to characterizing the sets of joint distributions that are compatible with a given specification of margins. 

When a sufficient number of margins is included in the factorized measure, the maximizers correspond precisely to the maximizers of the multi-information. For this to be the case, it is necessary and sufficient that the set of margins builds a connected covering of all variables. In particular, there are families of pair margins that are sufficient, which reflects previous results showing that the multi-information can be maximized over families of joint distributions that only include pair interactions. 

When the considered margins have disjoint sets of variables, the constraints that they impose on the joint distributions no longer specify the joint distributions down to the maximizers of the multi-information. Instead, they specify families of transportation polytopes some of whose vertices are multi-information maximizers. As we showed, certain choices of margins (as in the SFMI) lead to polytopes that contain some of the maximizers of the mutual information between two subsets of variables. 

\paragraph{MI vs. I}
The set of maximizers of the multi-information $I(X_1,\ldots, X_n,Y_1,\ldots, Y_n)$ is a discrete set of cardinality $(N!)^{2n-1}$. 
The set of maximizers of the mutual information $MI({\bf X},{\bf Y})$ is a discrete set of cardinality $(N^n)!$. The two sets are equal if $n=1$, and are disjoint otherwise. 
These are well known results, collected in Proposition~\ref{prop:MI} and Corollary~\ref{cor:MI}. 

\paragraph{FMI vs. I and MI}
The set of maximizers of $I_\Lambda(X_1,\ldots, X_n)$ is equal to the set of maximizers of $I(X_1,\ldots, X_n)$ if the considered set $\Lambda$ of margins builds a connected covering of all variables, and it is a strict superset otherwise. See Theorem~\ref{thm:FMI}. 
If $\Lambda$ is a connected covering of all variables, then the maximizers of $I_\Lambda(X_1,\ldots, X_n,Y_1,\ldots, Y_n)$ are disjoint from maximizers of $MI(\bX,\bY)$, unless $n=1$ in which case the two sets are equal. 
If $\Lambda$ is not a connected covering, the relation is open in general. 
We studied in detail the special case of SFMI, summarized below. 

\paragraph{SFMI vs. I and MI}
SFMI is the special case of $I_\Lambda$ with $\Lambda$ being a perfect matching of $X$ and $Y$ variables. This is not a connected covering, except when $n=1$. 
The set of SFMI maximizers consists of $N!^n$ disjoint polytopes of dimension $N^n-1-n(N-1)$, each of which has the structure of a central $n$-way transportation polytope defined by $1$-margins of size $N$. See Theorem~\ref{theorem:12}. 
Each of the $N!^n$ polytopes maximizing SFMI contains $N!^{n-1}$ maximizers of $I(X_1,\ldots, X_n,Y_1,\ldots, Y_n)$ and one maximizer of $MI({\bf X},{\bf Y})$. 
See Theorem~\ref{theorem:13}.

\subsection*{Factorized multi-information as an intrinsic reward}
When the MI is superimposed with another objective function, we may obtain specific types of solutions  which maximize both, or simply which tend to have a larger MI. 

The SFMI maximizers do not include all MI maximizers. 
An interpretation is that the SFMI maximizers also maximize the MI between specific pairs of variables. 
So for instance, the MI can be maximized if $H(Y_1|X_1)\neq 0$, if in exchange one has that $H(Y_2|X_1)=0$, meaning that sensor $2$ can be predicted from sensor $1$. However, the maximizers of SFMI insist that $H(Y_1|X_1)=0$. 
In this sense, the SFMI gives a more structured objective function than the MI. 

\subsection*{Possible generalizations}
We focused on two general types of margins. 
In the case of families that are not connected coverings, we considered the SFMI, which consists of a non overlapping covering by edges. 
Extensions of our analysis to other families of margins are possible. 
Any connected subfamily will then specify a maximizer of the multi-information over the covered variables. 
This will give rise to maximizing sets defined in terms of transportation polytopes with margins given by the maximal connected subfamilies. 
A natural question that arises is how to choose $\Lambda$ in order to capture, as tightly as possible, a given subset of maximizers of $MI(\bX,\bY)$. 

We have focused on systems where all variables have the same number of states. The same problem can be studied in the case of inhomogeneous variables as well. 
One should be able to obtain a characterization as was done in~\cite{37261} for the multi-information.

The case of continuous variables is another geneneralization that would be interesting to consider in the future. This has not been discussed in as much detail in the literature as the discrete case. 
    
\bigskip 
\noindent\textbf{Acknowledgement}
\small 
This project has received funding from the European Research Council (ERC) under the European Union's Horizon 2020 research and innovation programme (grant agreement n\textsuperscript{o} 757983).
    
\bibliographystyle{amsplain}
\bibliography{references}

\appendix 

\section{Related measures of multivariate correlation}
    Multi-information as defined above was first introduced by McGill (1954) \cite{mcgill1954multivariate} under the name of \textit{total correlation}, and later discussed in \cite{watanabe1960information} under the same name.  Over time as the relationship between total correlation and mutual information became clear, several authors have begun to call this quantity multi-information \cite{37261, friedman2001multivariate, slonim2006multivariate, slonim2005estimating, bekkerman2006combinatorial}. 
    However, McGill's seminal work on information transmission also introduced an entirely separate information-theoretic quantity under the name multivariate information.  This quantity's name shifted over time to also being called multi-information and multivariate mutual information, as referenced in \cite{vergara2014review,jakulin2003quantifying}.  This naming confusion has been the source of false equalities between the two quantities, such as in \cite[Eqn.~8]{summaryinfo}.  One fundamental difference in the two quantities is that multi-information as defined here is non-negative, and multivariate mutual information can take positive and negative values.

\section{Distributions with fixed Margins}
\label{appendix:transportation_polytopes}

In various contexts one is interested in the sets joint distributions that are compatible with given values of some of the marginals. We encounter a version of this problem here. Concretely, we study whether certain collections of subsets of variables can each have maximum multi-information, and, if so, what is the set of joint distributions that are compatible with these specifications. 

\subsection{Transportation polytopes}

An $n$-table is an $N_1\times\cdots \times N_n$ array of nonnegative real numbers. 
For a set of indices $\lambda\subseteq\{1,\ldots, n\}$, the $\lambda$-margin of a table $p$ is the $|\lambda|$-table obtained by summing the entries over all but the $\lambda$ indices. 
A multi-index transportation polytope or contingency table is the set of all $n$-tables that satisfy a set of given margins. 
The polytope is called \emph{central} if all entries of any of the considered margins are equal. 
An assignment polytope is the special case where all entries across all of the considered margins are equal (which implies $N_1=\cdots=N_n=N$). 
The two-way assignment polytopes are also called Birkhoff polytopes or polytopes of doubly stochastic matrices (when the value of the margin entries is $1$). 
The $N!$ vertices of the Birkhoff polytope are the permutation matrices of size $N\times N$. 

Optimal transportation problems are often formulated as linear programs with feasible set given by a transportation polytope. Hence the number of vertices and edges has been subject of study. 
For a 2-way transportation polytope, it is known that the dimension is equal to $(N_1-1)(N_2-1)$ (if the entry sums of the two margins are equal). 
A matrix $p$ in the polytope is a vertex if and only if the graph $F_p$ with edges $\{(x_1,x_2) \colon p(x_1,x_2)>0 \}$ forms a spanning forest of the full bipartite graph with edges $\{(x_1,x_2)\colon 1\leq x_1\leq N_1, 1\leq x_2\leq N_2\}$. See~\cite{yemelichev1984polytopes}. 
Already for $3$-way transportation polytopes, the number of vertices is not known in general~\cite{JDL}. 

The set of maximizers of the SFMI is a union of polytopes, each of which has the structure of an $n$-way assignment polytope defined by $1$-margins of size $N$.

\subsection{Marginal polytopes}

Marginal distributions satisfy certain relations among each other. Consider the tuple of $\lambda$-marginals for all $\lambda\in \Lambda$ for some fixed $\Lambda\in 2^{\{1,\ldots, n\}}$. The set of all possible tuples is called the marginal polytope. 
If we consider all $q$-marginals among $n$ variables with state spaces of cardinality $N$, the dimension of the marginal polytope is $\sum_{i=1}^q{n\choose i}(N-1)^i$. 
In turn, a fixed feasible and generic choice of all $q$ marginals cuts out a set (a transportation polytope) in the joint probability simplex of dimension $N^n-1 - \sum_{i=1}^q{n\choose i}(N-1)^i$. 
If the choice of the margins sits at the boundary of the marginal polytope, then the set that is cut out in the joint probability simplex will be smaller. 
Classifying the possible values of the dimension is an open problem in general. 
In our discussion of transportation polytopes maximizing the SFMI, we remove zeros and obtain uniform margins. 

Marginalizing over $n-q$ variables can be viewed as a linear projection $\sL : \D_n \to \D_q$. 
Each marginal distribution that we specify in $\D_q$ provides $N^q$ linear constraints (including normalization). 
In our discussion with margins being maximizers of multi-information, $N$ entries of the margin take value $1/N$ and the other $N^q-N$ entries take value $0$. 
The constraints with value $0$ imply that $(N^q-N)\cdot N^{n-q}$ entries of the joint distributions also take value $0$. 
So each margin creates $N$ linear constraints for $N\cdot N^{n-q}$ entries of the joint distribution, and in addition, it equates the remaining 
entries of the joint distribution to zero. 
If we forget about the inequality constraints, the number of linearly independent constraints that arise from the specification of the marginal distributions is equal to the dimension of the of the marginal polytope. 

\section{Examples of the sets of SFMI maximizers}

\subsection{The dimension of the sets of SFMI maximizers}

The linear sets of SFMI maximizers come from taking the intersection of $n$ margin constraints, each of which defines a set $\sS_{i,j}$.
Each of these sets is defined by a number of linear equality constraints, which in general are not independent for multiple margins. 
When there are $n$ pairs of binary variables, each $\sS_{i,j}$ is supported on $2^{2n-1}$ vectors. 
Only half of the support vectors between any given $\sS_{i,j}$ agree, meaning $\sS_{i,j} \cap \sS_{i,j}'$ has common support of size $2^{2n-2}$, and $\sS_{i,j} \cap \sS_{i,j}' \cap \sS_{i,j}''$ has support of size $2^{2n-3}$. 
Once the intersection of $n$ of these sets is performed, one is left with $2^{2n-n} = 2^n$ support vectors. 
Then, the linearly independent constraints arising from the $\sS$'s can be subtracted off of $2^n$. 
The result will be the dimension of the linear sets. 
The sets of maximizers also need to satisfy linear inequality constraints. In general, these can further reduce the dimension of the solution set. 

\begin{example}
Consider $n = 1$ pair of $N=2$ valued (binary) variables.  The SFMI maximizing set is exactly the MI maximizers, i.e. a zero dimensional set.  The set $\sS_{1,1}$ has support on $2^{1} = 2$ vectors, $00$ and $11,$ and there are two constraints, $a = 1/2$ and $b = 1/2$ for $a\d_{00} + b\d_{11}.$ Written as a matrix, it would be a rank 2 matrix of constraints.  Normalization is satisfied by these constraints, and the result is a zero dimensional set.
\end{example}

\begin{example}
Consider $n=2$ pairs of $N=2$ valued variables. Then $\sS_{1,1} \cap \sS_{2,1}$ has $4$ vectors in the support, and the constraints that arise are given by 
\bq \bpm
1 & 1 & 0 & 0\\
0 & 0 & 1 & 1\\
1 & 0 & 1 & 0\\
0 & 1 & 0 & 1 \epm 
\bpm a\\b\\c\\d \epm = \bpm 1\\1\\1\\1 \epm,
\eq
where $p = \frac{a}{2}\d_{0000} + \frac{b}{2}\d_{0101} + \frac{c}{2}\d_{1010} + \frac{d}{2}\d_{1111}.$  This matrix is rank 3, meaning the dimension of the resulting linear set is $4-3 = 1$. 
\end{example}

\begin{example} \label{example:28}
When $n=3$ pairs of $N=2$ valued variables.  
Then the intersection $\sS_{1,1} \cap \sS_{2,1} \cap \sS_{3,1}$ has support on $2^3 = 8$ vectors, and the rank of the constraint matrix,
\bq \bpm 
        1 & 1 & 1 & 1 & 0 & 0 & 0 & 0\\
        0 & 0 & 0 & 0 & 1 & 1 & 1 & 1\\
        1 & 1 & 0 & 0 & 1 & 1 & 0 & 0\\
        0 & 0 & 1 & 1 & 0 & 0 & 1 & 1\\
        1 & 0 & 1 & 0 & 1 & 0 & 1 & 0\\
        0 & 1 & 0 & 1 & 0 & 1 & 0 & 1
    \epm \bpm a\\b\\c\\d\\e\\f\\g\\h \epm = \bpm 1\\1\\1\\1\\1\\1 \epm 
\eq
is 4. Therefore we expect the linear sets of maximizers to be of dimension of null space, $8-4 = 4$. 
\end{example}

\begin{example}
Consider $n=2$ pairs of $N=3$ valued variables. Then $\sS_{1,1} \cap \sS_{2,1}$ is supported on $N^n = 9$ vectors.  The constraint matrix looks like,
\bq \bpm 
1 &1 &1 &0 &0 &0 &0 &0 &0\\
0 &0 &0 &1 &1 &1 &0 &0 &0\\
0 &0 &0 &0 &0 &0 &1 &1 &1\\
1 &0 &0 &1 &0 &0 &1 &0 &0\\
0 &1 &0 &0 &1 &0 &0 &1 &0\\
0 &0 &1 &0 &0 &1 &0 &0 &1 \epm \bpm a\\b\\c\\d\\e\\f\\g\\h\\u \epm = \bpm 1\\1\\1\\1\\1\\1\\1\\1\\1 \epm.
\eq

Here, the coefficients correspond to the joint distributions probabilities on $0000, 0101, 02020, 1010, 1111, 1212, 2020, 2121, 2222$ respectively.  This matrix is rank $5$ meaning the null space is $4$ dimensional.
\end{example}

\subsection{Transportation polytopes of SFMI maximizers}
\label{sec:sfmiex}

\begin{example} Consider the case of $n=2$ pairs of $N=2$ valued variables. 
In the following we illustrate explicitly how the SFMI maximizing distributions noted in Example~\ref{ex:1} are all of the SFMI maximizers. 
Since it is known that distributions maximizing SFMI must maximize each term of the sum individually, one may consider the set of SFMI maximizers as the intersection of $n$ sets where each set contains all the distributions which maximize a single term of the sum, i.e. each maximize $D(p(x_i,y_i)\|\sF)$ for some $i$. 

Consider the four binary variables $X_1,X_2,Y_1,Y_2$. 
The maximizers of $I(X_1,Y_1)$ are $p^1_1(X_1,Y_1)=\frac12( \delta_{00} +\delta_{11})$ and $p^2_1(X_1,Y_1)=\frac12( \delta_{01} +\delta_{10})$.  The corresponding joint distributions have the form $p(X_1,X_2,Y_1,Y_2) = p^j_1(X_1,Y_1)q(X_2,Y_2|X_1,Y_1)$.  Therefore, one may define $\sS_{1,1}$ to be the set of distributions which maximize $I(X_1,Y_1)$ and have marginal $p^1_1(X_1,Y_1).$  Likewise, $\sS_{1,2}$ is the set of distributions which maximize $I(X_1,Y_1)$ and have marginal $p^2_1(X_1,Y_1).$  These can be written as 
    \bq
    \label{eq:14} \bigg\{p \in \D_{15} \; : \; p = \frac12\sum_{x_2,y_2}q(x_2,y_2|0,0)\delta_{0x_20y_2} + \frac12\sum_{x_2,y_2}q(x_2,y_2|1,1)\delta_{1x_21y_2}\bigg\}
    \eq
    and
    \bq \label{eq:15} \bigg\{p \in \D_{15} \; : \; p = \frac12\sum_{x_2,y_2}q(x_2,y_2|0,1)\delta_{0x_21y_2} + \frac12\sum_{x_2,y_2}q(x_2,y_2|1,0)\delta_{1x_20y_2}\bigg\}
    \eq
    respectively.
    Here, $q(x_2,y_2|x_1=0,y_1=0)$ is the conditional probability of observing the joint state $(X_1 = 0, X_2 = x_2, Y_1 = 0, Y_2 = y_2).$  Similarly, the two sets $\sS_{2,1}$ and $\sS_{2,2}$ which maximize $I(X_2,Y_2)$ and have marginals $p^1_2(x_2,y_2)=\frac12( \d_{00} +\d_{11})$ and $p^2_2(x_2,y_2)=\frac12( \d_{01} +\d_{10})$ respectively can be written as
    \bq
    \label{eq:16} \bigg\{p \in \D_{15} \; : \; p = \frac12\sum_{x_1,y_1}q(x_1,y_1|0,0)\delta_{x_10y_10} + \frac12\sum_{x_1,y_1}q(x_1,y_1|1,1)\delta_{x_11y_11}\bigg\}
    \eq
    and
    \bq\label{eq:17} \bigg\{p \in \D_{15} \; : \; p = \frac12\sum_{x_1,y_1}q(x_1,y_1|0,1)\delta_{x_10y_11} + \frac12\sum_{x_1,y_1}q(x_1,y_1|1,0)\delta_{x_11y_10}\bigg\}.
    \eq

    There are 4 possibilities for how these sets may be intersected to yield maximizers of SFMI.  These are, $\sS_{1,1} \cap \sS_{2,1},$ $\sS_{1,1} \cap \sS_{2,2},$ $\sS_{1,2} \cap \sS_{2,1},$ and $\sS_{1,2} \cap \sS_{2,2}.$  It will be shown that the intersection of $\sS_{1,1} \cap \sS_{2,1}$ corresponds to the line of distributions in Eqn. (\ref{eq:10}), and the other intersections correspond in a similar fashion.  To see this, one may begin by considering the binary codes in the intersection.  These are all the binary codes which match in the 1st and 3rd entry, and match in the 2nd and 4th entry.  Thus, the distributions in $\sS_{1,1} \cap \sS_{2,1}$ can only have support on $\{0000, 0101, 1010, 1111\}$.  Thus, at most the intersection could be a 3 dimensional set, due to the normalization constraint.  Next, consider the linear constraints build into $\sS_{1,1}$.  Letting 
    \begin{align}
    \begin{split}
    a &= q(X_2 = 0, Y_2 = 0 | X_1 = 0, Y_1 = 0)\\
    b &= q(X_2 = 0, Y_2 = 0 | X_1 = 1, Y_1 = 1)\\
    c &= q(X_2 = 1, Y_2 = 1 | X_1 = 0, Y_1 = 0)\\
    d &= q(X_2 = 1, Y_2 = 1 | X_1 = 1, Y_1 = 1),
    \end{split}
    \end{align}
    the constraints can be written as $a+c = 1$ and $b+d = 1$. 
    Similarly, the distribution in $\sS_{2,1}$ must satisfy $a+b=1$ and $c+d=1.$  These constraints require that $a = d,$ and $b = c.$  The distributions in $\sS_{1,1} \cap \sS_{2,1}$ must take the form 
    \bq
    p = \frac{a}{2} \d_{0000} + \frac{b}{2} \d_{1010} + \frac{c}{2} \d_{0101} + \frac{d}{2} \d_{1111}.
    \eq
      Letting $\a = \frac{a}{2} = \frac{d}{2}$ and $\b = \frac{b}{2} = \frac{c}{2},$ the distributions in $\sS_{1,1} \cap \sS_{2,1}$ can be written as in Eqn. (\ref{eq:10}),
    \bq
    p = \a(\d_{0000} + \d_{1111}) + \b(\d_{0101} + \d_{1010}),
    \eq
    where $\a + \b = \frac{1}{2}$.  By considering the other intersections, $\sS_{1,1} \cap \sS_{2,2}, \; \sS_{1,2} \cap \sS_{2,1}, \; \sS_{1,2} \cap \sS_{2,2},$ the other 3 transportation polytopes in Example~\ref{ex:1} are recovered.
\end{example} 

\begin{example}
\label{ex:2}
    Consider the case of $n=3$ pairs of $N=2$ valued variables $(X_1,X_2,X_3,Y_1,Y_2,Y_3)$. 
    For any $(i,j) \in \{1,2,3\}\times\{1,2\}$, let $\sS_{i,j}$ denote the set of joint distributions whose $(X_i,Y_i)$ margin is the $j$-th maximizer of the mutual information. 
    Each $\sS_{i,j}$ contains distributions supported on $2^{6-1} = 32$ binary vectors. 
    The intersections $\sS_{1,j_1}\cap\sS_{2,j_2}$ contain distributions supported on $2^{6-2} = 16$ vectors, 
    and when intersecting $\sS_{1,j_1} \cap \sS_{2,j_2} \cap \sS_{3,j_3}$, we obtain distributions supported on $8$ binary vectors. 
    
    For concreteness, let $\sS_{i,1}$ correspond to $p(X_i,Y_i) = \frac12(\d_{00} + \d_{11})$. 
    Then $\sS_{1,1}$ is the set of joint distributions
    of the form 
    \begin{multline}
p = \frac{1}{2}\sum\limits_{\substack{x_2,x_3 \\ y_2,y_3}}\Big(q(x_2,x_3,y_2,y_3|0,0)\delta_{0x_2x_30y_2y_3} + q(x_2,x_3,y_2,y_3|1,1)\delta_{1x_2x_31y_2y_3} \Big),  
\label{eq:firstway}
    \end{multline}
where the numbers $q(x_2,x_3,y_2,y_3|x_1,y_1)$ define an arbitrary conditional distribution. 
Each of these $p$ is supported on strings $(x_1,x_2,x_3,y_1,y_2,y_3)$ that satisfy $x_1=y_1$. 

The distributions $p \in \sS_{1,1} \cap \sS_{2,1} \cap \sS_{3,1}$ are supported on strings $(x_1,x_2,x_3,y_1,y_2,y_3)$ that satisfy $x_i=y_i$ for all $i=1,2,3$. 
In turn, they have the form
{\small
    \begin{equation*} \label{eq:32}
    p = \frac{a}{2}\d_{000000} + \frac{b}{2}\d_{001001} + \frac{c}{2}\d_{010010} +
    \frac{d}{2}\d_{011011} + \frac{e}{2}\d_{100100} +  \frac{f}{2}\d_{101101} + \frac{g}{2}\d_{110110} + \frac{h}{2}\d_{111111}.
    \end{equation*}
}

The coefficients correspond to conditional probability distributions in three different ways. 
One way is as the conditional probabilities of $X_2,X_3,Y_2,Y_3$ given $X_1,Y_2$, with 
    \begin{align}
    \begin{split}
    a = q(0,0,0,0 | 0,0),\qquad & e = q(0,0,0,0 | 1,1),\\
    b = q(0,1,0,1 | 0,0),\qquad & f = q(0,1,0,1 | 1,1),\\
    c = q(1,0,1,0 | 0,0),\qquad & g = q(1,0,1,0 | 1,1),\\
    d = q(1,1,1,1 | 0,0),\qquad & h = q(1,1,1,1 | 1,1), 
    \end{split}
    \end{align} 
which corresponds to~\eqref{eq:firstway} and implies that $a+b+c+d=1$ and $e+f+g+h=1$. 
The other two ways result from conditioning on $X_2,Y_2$ and $X_3,Y_3$, respectively, and imply two more pairs of linear constraints. 
The six constraints form the linear system $M{\bf q} = {\bf 1}$ with
    \bq \bpm 
        1 & 1 & 1 & 1 & 0 & 0 & 0 & 0\\
        0 & 0 & 0 & 0 & 1 & 1 & 1 & 1\\
        1 & 1 & 0 & 0 & 1 & 1 & 0 & 0\\
        0 & 0 & 1 & 1 & 0 & 0 & 1 & 1\\
        1 & 0 & 1 & 0 & 1 & 0 & 1 & 0\\
        0 & 1 & 0 & 1 & 0 & 1 & 0 & 1
    \epm \bpm a\\b\\c\\d\\e\\f\\g\\h \epm = \bpm 1\\1\\1\\1\\1\\1 \epm .
    \eq
    The matrix $M$ is simply the map computing the $(X_i,Y_i)$-margins, restricted to the support points (i.e., strings with $x_i=y_i$ for $i=1,2,3$). 
    It can be regarded as a sufficient statistics matrix of a binary independence model of $n$ variables, including the statistic that computes the total mass. 
    The matrix has rank $4$ and the vector $\frac14 {\bf 1} \in \R^8$ is a particular solution of the linear system.  At this particular solution, one obtains a maximizer of $MI({\bf X},{\bf Y}).$ 
    
    As discussed in Example~\ref{example:28}, the space of solutions is 4 dimensional.  To obtain all solutions, we consider the kernel of $M$, which is spanned by the rows of 
    \begin{equation}
    \begin{pmatrix}
     1&  1& -1& -1& -1& -1&  1&  1\\
     1& -1&  1& -1& -1&  1& -1&  1\\
     1& -1& -1&  1&  1& -1& -1&  1\\
     1& -1& -1&  1& -1&  1&  1& -1
     \end{pmatrix}.
     \label{eq:exkernelM}
    \end{equation}
    Any linear combination of these rows can be added to $\frac14 {\bf 1} \in \R^8$ to obtain another vector that satisfies the marginal equality constraints. 
    The intersection of the $4$-dimensional affine space with the probability simplex is obtained by requiring that the solutions have non-negative entries. 
    This results in a polytope. We compute the vertex representation using the free open-source mathematics software Sage~\cite{sagemath}, and obtain the $6$ rows 
    \begin{equation}
    \begin{pmatrix}
    0& 0& 0& 1& 1& 0& 0& 0 \\
    0& 0& 1& 0& 0& 1& 0& 0 \\
    0& 1& 0& 0& 0& 0& 1& 0 \\
    1& 0& 0& 0& 0& 0& 0& 1 \\
    1/2& 0& 0& 1/2& 0& 1/2& 1/2& 0 \\
    0& 1/2& 1/2& 0& 1/2& 0& 0& 1/2 
    \end{pmatrix}. 
    \end{equation}

 Based off of the vertex representation above, the distributions which are in $\sS = \sS_{1,1} \cap \sS_{2,1} \cap \sS_{3,1}$ are of the form 
 \begin{multline*}
p = 
 \tfrac{\a_1}{2}(\d_{000000} + \d_{111111}) + \tfrac{\a_2}{2}(\d_{100100} + \d_{011011}) \\
+\tfrac{\a_3}{2}(\d_{010010} + \d_{101101}) 
+\tfrac{\a_4}{2}(\d_{001001} + \d_{110110}) \\ 
+\tfrac{\a_5}{4}(\d_{000000} + \d_{011011} + \d_{101101} + \d_{110110})\\
+\tfrac{\a_6}{4}(\d_{001001}  + \d_{010010}  + \d_{100100}  + \d_{111111}), \end{multline*}
     where 
     $\sum_i \a_i = 1$ ensures normalization and $\a_i \geq 0$ for $i = 1,\dots,6$. 
     One can quickly check that $p(X_i,Y_i) = \frac12(\d_{00} + \d_{11})$ for $i = 1,2,3$. 
    The vertices $\alpha_1=1$, $\alpha_2=1$, $\alpha_3=1$, $\alpha_4=1$ are the only distributions with support of cardinality two and they are also the only maximizers of multi-information within $\sS$. 
    The choices $\a_1 = \a_2 = \a_3 = \a_4=1/4$ and $\a_5 = \a_6=1/2$ give the same distribution, which is also the only maximizer of $MI({\bf X},{\bf Y})$ within $\sS$. 
    This observation motivates Theorem~\ref{theorem:13} which states that each polytope of SFMI maximizers contains one or more simplices, and at the center of each simplex lies a $MI({\bf X},{\bf Y})$ maximizer.  Furthermore, when restricted to the set of SFMI maximizers, the distributions with maximum entropy are $MI({\bf X},{\bf Y})$ maximizers. 
    
    We have discussed one of the $N!^n = 2!^3 = 8$ polytopes. The other $7$ are similar. 
    Each of them has $4$ vertices that are maximizers of the multi-information and form a $3$-simplex 
    with a maximizer of $MI({\bf X},{\bf Y})$ at its center. 
    The $8$ centers are following vectors scaled by $1/8$, whereby grouped terms correspond to the 
    vertices: 
     {\small
     \begin{align*}
     &(\d_{000000} + \d_{111111}) + 
     (\d_{001001} + \d_{110110}) + 
     (\d_{010010} + \d_{101101}) + 
     (\d_{011011} + \d_{100100}),\\
     &(\d_{000001} + \d_{111110}) + 
     (\d_{001000} + \d_{110111}) + 
     (\d_{010011} + \d_{101100}) + 
     (\d_{011010} + \d_{100101}),\\
     &(\d_{000010} + \d_{111101}) + 
     (\d_{010000} + \d_{101111}) + 
     (\d_{001011} + \d_{110100}) + 
     (\d_{011001} + \d_{100110}),\\
     &(\d_{000100} + \d_{111011}) + 
     (\d_{100000} + \d_{011111}) + 
     (\d_{001101} + \d_{110010}) + 
     (\d_{101001} + \d_{010110}),\\
     &(\d_{000011} + \d_{111100}) + 
     (\d_{001010} + \d_{110101}) + 
     (\d_{010001} + \d_{101110}) + 
     (\d_{011000} + \d_{100111}),\\
     &(\d_{000101} + \d_{111010}) + 
     (\d_{001100} + \d_{110011}) + 
     (\d_{100001} + \d_{011110}) + 
     (\d_{101000} + \d_{010111}),\\
     &(\d_{000110} + \d_{111001}) + 
     (\d_{010100} + \d_{101011}) + 
     (\d_{100010} + \d_{011101}) + 
     (\d_{110000} + \d_{001111}),\\
     &(\d_{000111} + \d_{111000}) + 
     (\d_{001110} + \d_{110001}) + 
     (\d_{010101} + \d_{101010}) + 
     (\d_{011100} + \d_{100011}).
     \end{align*}}
    One can see that the simplices are disjoint. 
    In fact, distributions from different simplices have disjoint supports, meaning that the simplices sit in disjoint faces of the joint probability simplex. 
    These simplices having disjoint support only occurs when dealing with binary variables.
\end{example}   

\begin{example}
\label{ex:19}
Consider the case of $n=2$ pairs of $N=3$ valued variables. 
For each $i=1,2$ let $\sS_{i,{j_i}}$ be the set of joint distributions which have marginal $p(X_i,Y_i)$ as one of the $N! = 3!=6$ multi-information maximizers, so $j_i \in \{1,\dots,6\}$. 
For concreteness, let the correspondence be
\begin{align*}
\sS_{i,1} :&\quad p(X_i,Y_i) = \frac13(\d_{00} + \d_{11} + \d_{22}),\\
\sS_{i,2} :&\quad p(X_i,Y_i) = \frac13(\d_{00} + \d_{12} + \d_{21}),\\
\sS_{i,3} :&\quad p(X_i,Y_i) = \frac13(\d_{01} + \d_{10} + \d_{22}),\\
\sS_{i,4} :&\quad p(X_i,Y_i) = \frac13(\d_{01} + \d_{12} + \d_{20}),\\
\sS_{i,5} :&\quad p(X_i,Y_i) = \frac13(\d_{02} + \d_{11} + \d_{20}),\\
\sS_{i,6} :&\quad p(X_i,Y_i) = \frac13(\d_{02} + \d_{10} + \d_{22}).
\end{align*}
There are $N!^n = 3!^2 = 36$ intersections of the form $\sS_{1,{j_1}} \cap \sS_{2,{j_2}}$, each corresponding to a transportation polytope. 
They together contain all $N!^{2n-1}=3!^{2\cdot 2-1}=216$ maximizers of the multi-information. 
Consider the
support sets of three to start: 
\begin{align*}
\supp(\sS_{1,1} \cap \sS_{2,1}) &= \{0000,0101,0202,1010,1111,1212,2020,2121,2222 \},\\
\supp(\sS_{1,1} \cap \sS_{2,2}) &= \{0000,0102,0201,1010,1112,1211,2020,2122,2221 \},\\
\supp(\sS_{1,1} \cap \sS_{2,3}) &= \{0001,0100,0202,1011,1110,1212,2021,2120,2222 \}. 
\end{align*}
The probabilities assigned to the joint states are not arbitrary, since they need to satisfy the margins. 
One particular subset of distributions in $\sS_{1,1} \cap \sS_{2,1}$ can be written as 
\begin{equation*}
p = \frac{\a_1}{3}(\d_{0000} + \d_{1111} + \d_{2222}) + \frac{\a_2}{3}(\d_{0101} + \d_{1212} + \d_{2020}) + \frac{\a_3}{3}(\d_{0202} + \d_{1010} + \d_{2121}),
\end{equation*}
where $\a_1 + \a_2 + \a_3 = 1$. 
This is a simplex whose $N^{n-1}=3^{2-1}=3$ vertices maximize multi-information. 
Another one can be written as 
\begin{equation*}
p = \frac{\a_1}{3}(\d_{0000} + \d_{1212} + \d_{2121}) + \frac{\a_2}{3}(\d_{1111} + \d_{2020} + \d_{0202}) + \frac{\a_3}{3}(\d_{2222} + \d_{0101} + \d_{1010}). 
\end{equation*}
This is another simplex whose $3$ vertices maximize multi-information. 
Both simplices have the same centroid, with $\alpha_1=\alpha_2=\alpha_3=1/3$, which is a maximizer of $MI({\bf X},{\bf Y})$. 

The other transportation polytopes (intersections $\sS_{1,{j_1}} \cap \sS_{2,{j_2}}$) are similar. 
For example, the two simplices in $\sS_{1,1} \cap \sS_{2,2}$ are 
\begin{align*}
&\frac{\a_1}{3}(\d_{0000} + \d_{1211} + \d_{2122}) +  \frac{\a_2}{3}(\d_{1112} + \d_{2020} + \d_{0201}) +  \frac{\a_3}{3}(\d_{1010} + \d_{0102} + \d_{2221}),\\
&\frac{\a_1}{3}(\d_{0000} + \d_{1112} + \d_{2221}) +  \frac{\a_2}{3}(\d_{1211} + \d_{0102} + \d_{2020}) +  \frac{\a_3}{3}(\d_{2122} + \d_{1010} + \d_{0201}), 
\end{align*}
and the two simplices in $\sS_{1,1} \cap \sS_{2,3}$ are 
\begin{align*}
&\frac{\a_1}{3}(\d_{0001} + \d_{1212} + \d_{2120}) +  \frac{\a_2}{3}(\d_{2222} + \d_{1011} + \d_{0100}) +  \frac{\a_3}{3}(\d_{2021} + \d_{1110} + \d_{0202}),\\
&\frac{\a_1}{3}(\d_{0001} + \d_{2222} + \d_{1110}) +  \frac{\a_2}{3}(\d_{1212} + \d_{2021} + \d_{0100}) +  \frac{\a_3}{3}(\d_{2120} + \d_{0202} + \d_{1011}).
\end{align*}

Observe that there are $N!^{n-1}=3!^{2-1}=6$ multi-information maximizers per transportation polytope, and there are $N!^n$ transportation polytopes total. 
This number is expected because all of the $N!^{2n-1}$ maximizers of multi-information are SFMI maximizers. 
\end{example}

\end{document}